\renewcommand{\checkmark}{\ding{51}}
\newcommand{\crossmark}{\ding{55}}
\newcommand{\optionalmark}{\textcolor[gray]{0.4}{(\checkmark)}}
\newtheorem{theorem}{Theorem}[section]
\newtheorem{lemma}[theorem]{Lemma}
\newtheorem{proposition}[theorem]{Proposition}
\newtheorem{definition}[theorem]{Definition}
\newtheorem{observation}[theorem]{Observation}
\newcommand{\diam}{\operatorname{diam}}
\DeclarePairedDelimiter{\ceil}{\lceil}{\rceil}
\DeclarePairedDelimiter{\floor}{\lfloor}{\rfloor}
\newcommand{\kdots}{, \dots, }
\newcommand{\etal}{{\it et~al.}}
\newcommand{\ie}{{\it i.e.}\xspace}
\newcommand{\eg}{{\it e.g.}\xspace}
\newcommand{\cf}{{\it cf.}\xspace}
\newcommand{\original}{} 
\newcommand{\new}[1]{{#1}^+}
\newcommand{\dega}{{\ensuremath{d^+}}\xspace} 
\newcommand{\degs}{{\ensuremath{d^\circ}}\xspace} 
\newcommand{\dego}{{\ensuremath{d}}\xspace} 
\newcommand{\bigo}{\mathcal{O}}
\newcommand{\arce}{{\ensuremath{\original{e}}}\xspace}
\newcommand{\avg}{{\mathbf{\bar x}}\xspace}
\newcommand{\norm}[1]{\left\lVert#1\right\rVert}
\newcommand{\out}{F^{out}}
\newcommand{\outv}[1]{\ensuremath{ {\mathbf F}_{#1}^{out}}\xspace}
\newcommand{\inc}{F^{in}}
\newcommand{\inv}[1]{\ensuremath{ {\mathbf F}_{#1}^{in}}\xspace}
\newcommand{\xv}{{\mathbf r}}
\newcommand{\del}{{\pmb {\delta}}}
\newcommand{\load}{x}
\newcommand{\loadv}{\mathbf x}
\newcommand{\flow}{f}
\newcommand{\cumflow}{F}
\newcommand{\transitionmatrix}{{P}}
\newcommand{\randomwalk}{{\mathbf P}}
\newcommand{\hatt}{\widehat{T}}
\newcommand{\RR}{{\textsc{Rotor-Router}}\xspace}
\newcommand{\RRstar}{{\textsc{Rotor-Router}}\ensuremath*\xspace}
\newcommand{\initloadv}{\loadv_1}
\newcommand{\initload}{\load_1}
\newcommand{\infnorm}[1]{\ensuremath{\norm{{#1}}_\infty}\xspace}
\newcommand{\onenorm}[1]{\ensuremath{\norm{{#1}}_1}\xspace}
\newcommand{\twonorm}[1]{\ensuremath{\norm{{#1}}_2}\xspace}
\newcommand{\eps}{{\pmb\varepsilon}\xspace}
\newcommand{\bounderror}{\ensuremath{(\delta+1)\dego}\xspace}
\newcommand{\bounderrorproof}{\ensuremath{\delta\dega+r}\xspace}
\newcommand{\tmi}{\ensuremath{t_{\mu}}\xspace}
\newcommand{\f}[1]{\ensuremath{f_{#1}}\xspace}
\newcommand{\natN}{\mathbf{N}}
\newcommand{\class}{good $s$-balancer\xspace}
\newcommand{\classpara}[1]{good $#1$-balancer\xspace}
\newcommand{\Classs}{Good $s$-balancers\xspace}
\newcommand{\classs}{good $s$-balancers\xspace}
\newcommand{\algoRF}{\textsc{Send}}
\begin{document}

\title{Improved Analysis of Deterministic Load-Balancing Schemes}

\author[1]{Petra Berenbrink}
\author[2]{Ralf Klasing}
\author[3]{Adrian Kosowski}
\author[1,4]{Frederik Mallmann-Trenn}
\author[5]{Przemys\l{}aw Uznański\footnote{Part of the work was done while the author was affiliated with LIF, CNRS and Aix-Marseille Université}}

\affil[1]{Simon Fraser University, \url{ {petra, fmallman }@sfu.ca}}
\affil[2]{CNRS -- LaBRI -- Université de Bordeaux, \url{ralf.klasing@labri.fr}}
\affil[3]{Inria -- LIAFA -- Paris Diderot University, \url{adrian.kosowski@inria.fr}}
\affil[4]{École Normale Supérieure, \url{Frederik.Mallmann-Trenn@ens.fr}}
\affil[5]{HIIT -- Aalto University, \url{przemyslaw.uznanski@aalto.fi}}

\date{}
\maketitle

\begin{abstract}
We consider the problem of deterministic load balancing of tokens in the discrete model. A set of $n$ processors is connected into a $d$-regular undirected network. In every time step, each processor exchanges some of its tokens with each of its neighbors in the network. The goal is to minimize the discrepancy between the number of tokens on the most-loaded and the least-loaded processor as quickly as possible.

Rabani {\it et al.}\ (1998) present a  general technique for the
analysis of a wide class of discrete load balancing algorithms.
Their approach is to characterize the deviation between the actual loads of a discrete balancing algorithm with the distribution generated by a related
Markov chain. The Markov chain can also be regarded as the underlying model of a continuous diffusion  algorithm.
Rabani {\it et al.} showed that after time $T = \bigo(\log (Kn)/\mu)$, any algorithm of their class achieves a discrepancy  of $\bigo(d\log n/\mu)$, where $\mu$ is the spectral gap of the transition matrix of the graph, and $K$ is the initial load discrepancy in the system.

In this work we identify some natural additional conditions on deterministic balancing algorithms, resulting in a class of algorithms reaching a smaller discrepancy. This class contains well-known algorithms, \eg, the \RR.
Specifically, we introduce the notion of \emph{cumulatively fair} load-balancing algorithms
where in any interval of consecutive time steps, the total number of tokens sent out over an edge by a node is the same (up to constants) for all adjacent edges.
We prove that algorithms which are cumulatively fair and  where every node retains a sufficient part of its load in each step, achieve a discrepancy of $\bigo(\min\{d\sqrt{\log n/\mu} ,d\sqrt{n}\})$ in time $\bigo(T)$. We also show that in general neither of these assumptions may be omitted without increasing discrepancy.
We then show by a combinatorial potential reduction argument that any cumulatively fair scheme satisfying some additional assumptions achieves a discrepancy of $\bigo(d)$ almost as quickly as the continuous diffusion process. This positive result applies to some of the simplest and most natural discrete load balancing schemes.

\end{abstract}

\thispagestyle{empty}
\newpage

\section{Introduction}\label{intro}

In this paper, we analyze  \emph{diffusion-based load balancing algorithms}.
We assume that the processors are connected by an arbitrary $d$-regular graph $G$. At the beginning, every node has a certain number of tokens, representing its initial load. In general, diffusion-based  load balancing algorithms operate in parallel, in synchronous steps. In each step, every node balances its load with {\em all} of its neighbors.
The goal is to distribute the tokens in the network graph as evenly as possible.
More precisely, we aim at minimizing the \emph{discrepancy} (also known as \emph{smoothness}), which is defined as the difference between the maximum load and the minimum load, taken over all nodes of the network.

One distinguishes between \emph{continuous} load balancing models, in which load can be split arbitrarily, and the much more realistic \emph{discrete} model, in which load is modeled by tokens which cannot be split. In the former case, the standard continuous diffusion algorithm works as follows.
Every node $u$ having load $\load(u)$ considers all of its neighbors at the same time, and sends $\load(u) / (d+1)$ load to each of its $d$ neighbors, keeping $\load(u) / (d+1)$ load for itself. This process may also be implemented more efficiently: for each neighbor $v$ of $u$, node $u$ sends exactly $\max\{0,(\load(u)-\load(v))/(d+1)\}$ load to $v$.
It is  well-known (\cf\ \eg~\cite{SS94}) that the load in the continuous model will eventually be perfectly balanced.
In the discrete case with indivisible tokens, exact simulation of the continuous process is not possible. A node $u$ may instead try to round the amount of load sent to its neighbor up or down to an integer. Discrete balancing approaches are, in general, much harder to analyze than continuous algorithms.

In \cite{RSW98}, Rabani \etal\ suggest a framework to analyze a wide class of discrete neighborhood load balancing algorithms in regular graphs (it can be adapted to non-regular graphs).
The scheme compares the discrete balancing algorithm with its continuous version, and the difference is used to  bound the so-called error that occurs due to the rounding.
Their results hold for {\it round-fair} algorithms where any node $u$ having load $\load_t(u)$ at time $t$ sends  either
$\floor*{\load_t(u)/(d+1)}$ or $\ceil*{\load_t(u)/(d+1)}$ tokens over its edges.
They show that the discrepancy is bounded by $\bigo(d \log n / \mu)$ after $T= \bigo(\log(K n)/\mu)$ steps, where~$\mu$ is the eigenvalue gap of the transition matrix of the underlying Markov chain and $K$ is the initial load discrepancy.
The scheme applies to any discrete load balancing scheme which, at every time step, rounds the load which would be exchanged in the continuous diffusion process by a given pair of nodes to one of the nearest integers, either up or down.

The time $T$ in the above bound is also the time in which a continuous algorithm balances the system load (more or less) completely. $T$ is closely related to the {\em mixing time } of a random walk on $G$, which is the time it takes for a random walk to be on  every node with almost the same probability. Within the class of schemes considered in~\cite{RSW98}, the bound of $\bigo(d \log n / \mu)$ on discrepancy cannot be improved for many important graph classes, such as constant-degree expanders. Since the work~\cite{RSW98}, many different refinements and variants of this approach have been proposed~\cite{SS12,ABS12,FS09,FGS10}, as well as extensions to other models, including systems with non-uniform tokens~\cite{ABS12} and non-uniform machines~\cite{AB12}.

\vspace{2mm}
\subsection{Our Contribution}
\vspace{1mm}

The main goal of this paper is to continue the work of \cite{RSW98} by analyzing properties/classes of deterministic algorithms that balance better in the diffusive model than the class defined in \cite{RSW98}.
We suggest and analyze two general classes of balancing algorithms (called cumulatively fair balancers and good balancers) that include many well-known  diffusion algorithms and we bound the discrepancy they achieve after
$\bigo(T)$ time steps. Within the framework of schemes which are deterministic and pose no major implementation challenges (i.e., do not generate negative load and do not rely on additional communication), we obtain a number of significant improvements with respect to the state-of-the-art, cf.~Table~\ref{table:compare}.

For our algorithms we assume that every node of the graph has in addition to its $d$ original edges
$d^{\circ} \geq d$ many self-loops. We define $d^{+}=d^{\circ}+d$ as the degree of the graph including the self-loops.


\newcommand{\tablebgcolor}{}
\newcommand{\tablebgcolordark}{}
\newcommand{\cellbgcolor}{}
\newcommand{\cellbgcolordark}{}

\begin{table}[h]
\label{table:results}
\resizebox{\columnwidth}{!}{
\begin{tabular}{lllllllll}
	\toprule		
			{Algorithm} &{Discrepancy after time $\bigo(T)$} &
			{Time to reach $O(d)$ discrepancy} & {Ref.} & {D} & {SL} & {NL} & {NC} \\
			 \midrule
		{\bf Discrete diffusion with arbitrary rounding on edges} & $\bigo\left(\frac{d\log n}{\mu}\right)$ & \crossmark\ (in general) & \cite{RSW98} & \optionalmark & \optionalmark & \optionalmark & \optionalmark  \\
	Randomized distribution of extra tokens by vertices& $\bigo\left(\min\left\{d^2, d + \sqrt{\frac{d \log d}{\mu}}\right\} \sqrt{\log n}\right) $ & \crossmark& \cite{BCFFS11,SS12} & \crossmark & \checkmark & \checkmark & \checkmark \\	
	Randomized rounding to nearest integers on edges
		 & $\bigo\left(\sqrt{d\log n} \right)$ &\crossmark& \cite{SS12} & \crossmark & \checkmark & \crossmark & \checkmark \\[2mm]
\tablebgcolor  Computation based on continuous diffusion & $\Theta(d)$ & $\bigo(T)$& \cite{ABS12} & \checkmark & \crossmark & \crossmark & \crossmark \\
\midrule
\tablebgcolor
		{\bf Cumulatively fair balancers} & $\bigo\left( \dego   \min\left\{\sqrt\frac{\log n}{\mu} ,\sqrt{n}\right\} \right)$ & \crossmark\ (in general)& Thm \ref{the:prediscrepancyinreggraphs} & \optionalmark & \optionalmark & \optionalmark & \optionalmark \\
\tablebgcolor		\phantom{o}\textbullet\ \RR &         '' & \crossmark & Thm \ref{the:prediscrepancyinreggraphs} & \checkmark & \crossmark & \checkmark & \checkmark  \\
\tablebgcolor		\phantom{o}\textbullet\ ${\bf \algoRF}\left(\floor*{\sfrac{x}{\dega}}\right)$  &         '' & open & Thm \ref{the:prediscrepancyinreggraphs} & \checkmark & \checkmark & \checkmark & \checkmark \\[2mm]

\tablebgcolordark
\cellbgcolor		\phantom{o}\textbullet\ {\bf \Classs} &         ''  & $\bigo\left(T +   \frac{\dego\log^2 n}{\mu} \right)$ & Thm \ref{the:timecprime} & \optionalmark & \optionalmark & \optionalmark & \optionalmark\\
\tablebgcolordark
\cellbgcolor		\phantom{ooo}\textbullet\ \RRstar &         ''  &         '' & Thm \ref{the:timecprime} & \checkmark & \crossmark & \checkmark & \checkmark \\
\tablebgcolordark
\cellbgcolor		\phantom{ooo}\textbullet\ ${\bf \algoRF}\left([\sfrac{x}{\dega}]\right)$, for any $ \dega > 2d $ &         ''  &       ''   & Thm \ref{the:timecprime} & \checkmark & \checkmark & \checkmark & \checkmark \\
\tablebgcolordark
\cellbgcolor		\phantom{ooo}\textbullet\ ${\bf \algoRF}\left([\sfrac{x}{\dega}]\right)$, for any $\dega \geq 3d $ &         ''  &         $\bigo\left(T +   \frac{\log^2 n}{\mu} \right)$ & Thm \ref{the:timecprime} & \checkmark & \checkmark & \checkmark & \checkmark \\
[2mm]
		\bottomrule
	\end{tabular}
	}
\vspace{2mm}

{
    \small
Legend: D --- Deterministic process; SL --- Stateless process; NL --- Cannot produce negative loads;\\
\phantom{Legend: }NC --- No  additional communication required; \optionalmark\ denotes properties achieved for some implementations;\\
\phantom{Legend: }$[\cdot]$ denotes rounding to the nearest integer.
}

	\caption{
    A comparison of the discrepancy of load-balancing algorithms in the diffusive model for $d$-regular graphs. In all result statements, if not specified otherwise, we assume that the graph is augmented with at least $d$ self-loops per node.\label{table:compare}
}
\end{table}


\paragraph{Cumulatively $\delta$-fair balancers.}
We call an algorithm cumulatively $\delta$-fair if
\emph{(i)} for every interval of consecutive time steps,  the total number of tokens an algorithm sends over the original edges (non-self-loops) differs by at most a small
constant $\delta$ and
\emph{(ii)} for every time step
every edge (original edges and self-loops) of a node receives at least $\floor*{x/\dega}$ many tokens, where $x$ is the current load of the node. Cumulatively $\delta$-fair balancers are a subclass of the algorithms studied in~\cite{RSW98}, which satisfied weaker fairness conditions.
We show (Theorem \ref{the:prediscrepancyinreggraphs}) that algorithms satisfying these conditions with $d^{\circ}\ge d$ achieve a discrepancy of $\bigo(d \cdot \min\{\sqrt{\log n / \mu},\sqrt n\})$ in $\bigo(T)$ time steps.
 The restrictions result in deterministic balancing schemes with an improved discrepancy after $\bigo(T)$ time steps.
  For example, for expanders, the achieved discrepancy after time $\bigo(T)$ is $\bigo(\sqrt{\log n})$, as opposed to $\Theta(\log n)$.

Additionally to the upper bound, we show that the discrepancy can be of order $\Omega(d \cdot \diam)$ ($\diam$ is the diameter of the graph)
if we drop the condition of cumulative fairness (Theorem \ref{lower1}) or remove
self-loops completely (Theorem \ref{lowerbound:rotorrouter}). In more detail, Theorem \ref{lower1} shows that there are round-fair balancers that satisfy the constraints of~\cite{RSW98} and that have $\Omega(d \cdot \diam)$ discrepancy nonetheless. Such discrepancy is worse than the one we obtain for cumulatively fair balancers (Theorem~\ref{the:prediscrepancyinreggraphs}) for many graph classes. In Theorem~\ref{lowerbound:rotorrouter} we show for a specific cumulatively $1$-fair balancer not using any self-loops (\RR), that it cannot achieve discrepancy better than $\Omega( n)$ on a cycle with $n$ nodes.

The class of cumulatively fair balancers contains  many well-known deterministic algorithms.
In particular, many of these algorithms are {\it stateless}, meaning the load any node sends over edges in any step depends solely on the load of the node at this time step.
 An example for a stateless cumulatively fair balancer is
 ${\bf \algoRF}\left(\floor*{\sfrac{x}{\dega}}\right)$,
where a node with load $x$ sends $\floor*{\sfrac{x}{\dega}}$ many tokens to  every original edge (non-self-loop). The remaining $x-\floor*{\sfrac{x}{\dega}}$ tokens are distributed over self-loops such that every self-loop receives at least
$\floor*{\sfrac{x}{\dega}}$ many tokens.
Similarly, the algorithm
${\bf \algoRF}\left([\sfrac{x}{\dega}] \right)$
 is cumulatively fair.  Here  $[\sfrac{x}{\dega}]$ rounds $\sfrac{x}{\dega}$ to the next integer and a node
with load $x$ sends $[x/\dega]$ tokens over every original edge.
Another well-known algorithm in this class is
the so-called \RR model (also referred to in the literature as the Propp model,~\cite{AB13,FS10,CDST07}) which uses a simple round-robin approach to distribute the tokens to the $d^{+}$ neighbors, i.e., over all its edges.

%
%
%

\paragraph{\Classs.}
The class of \classs can be regarded as a restriction of cumulatively $1$-fair balancers.
A cumulatively $1$-fair balancer is a \class if it  \emph{(i)} is round-fair (every edge receives $\floor*{x/\dega}$ or $\ceil*{x/\dega}$ many tokens, where $x$ is the current load of a node) \emph{(ii)} sends over at least $s$ self-loops $\ceil*{x/\dega}$ many tokens in every round.
We show that algorithms of this class achieve a $\bigo(d)$-discrepancy within $\bigo(T + d\log^2 n / \mu)$ time steps.
Moreover, if $s=\Omega(d)$, then the same discrepancy of $\bigo(d)$ is reached within $\bigo(T + \log^2 n / \mu)$ time steps.
The class of \classs contains many well-known algorithms.
For example, the algorithm
 ${\bf \algoRF}\left([\sfrac{x}{\dega}] \right)$ is a \class for $\dega > 2d$. If $\dega \geq 3d$, then it achieves a $\bigo(d)$ discrepancy in $\bigo(T + \log^2 n / \mu)$ time.
Additionally, the class contains some variants of the rotor-router approach, \eg, one variant which we denote by \RRstar. This algorithm maintains $d-1$ self-loops, together with one special self-loop, which always receives $\ceil*{\load_{t}(u)/(2\dego)}$ tokens. The remaining tokens are distributed fairly using a rotor-router on the original edges and the $\dego - 1$ self-loops.

The only other result in the literature known to the authors which achieves a discrepancy of $\bigo(d)$ in $\bigo(T)$ steps in the {\em diffusive model} is the one of \cite{ABS12}
(see \cite{SS12} for $\bigo(d)$ discrepancy in the dimension exchange model where nodes balance with only one neighbor per round).
Their algorithms simulate and mimic (with the discrete tokens) the continuous flow.






\paragraph{Technical contributions.}

%

Our techniques for the analysis of cumulatively fair balancers rely on a comparison between our discrete process and  the continuous process. The latter can also be regarded as a Markovian process (random walk) which is governed by the transition matrix of the graph.
We calculate the total deviation (of any cumulatively fair balancer) to the continuous process as done in \cite{RSW98}. However, instead of doing it step-by-step as in \cite{RSW98}, the comparison is done over long time intervals. This was done in, \eg, \cite{YWB03,KP14}, in the context of the graph exploration problem. Our analysis of this class connects this deviation to the value $\bigo({\log n}/{\mu})$, which is a natural upper bound on the mixing time.
For the analysis of \classs we combine the algebraic techniques used for the analysis of cumulatively fair balancers with a potential function approach. Whereas all cumulatively $1$-fair balancers admit a natural potential function, which is (weakly) monotonous throughout the balancing process, for the narrower class of \classs we can define time phases so that in each time phase, the process exhibits a strict potential drop in each phase of balancing, up to a balancing discrepancy of $\bigo(d)$.
Even though we limit ourselves to regular graphs in this paper, our results can be extended to non-regular graphs.



\subsection{Related Work}\label{known}

Herein we only consider related results for load balancing in the discrete diffusive and balancing circuit models, and some results for the rotor-router model which are relevant to our work.

\paragraph{Diffusive load balancing.} Discrete load balancing has been studied in numerous works since~\cite{RSW98}. The authors of \cite{FGS10} propose a deterministic load balancing process in which the continuous load transferred along each edge is rounded up or down deterministically, such that the sum of the rounding errors on each edge up to an arbitrary step $t$ is bounded by a constant. This property  is called the {\em  bounded-error property}. Then they show  that after $T$ steps any process with bounded-error property achieves a discrepancy  of $\bigo(\log^{3/2} n)$ for hypercubes and $\bigo(1)$  for constant-degree tori. There are no similar results for other graph classes. Note that the algorithm of \cite{FGS10} has the problem that  the original demand of a node might exceed its available load,  leading to so-called {\em negative load}.

In \cite{AB13}, the authors consider \RR-type walks as a model for load balancing. It is assumed that half of the edges of every node are self-loops. The authors present an algorithm which falls in the class of  {\em bounded-error diffusion processes} introduced in \cite{FGS10}.
This results in  discrepancy bounds of~$\bigo(\log^{3/2} n)$ and~$\bigo(1)$ for hypercube and $r$-dimensional torus with $r=\bigo(1)$. In \cite{AB12}, the authors consider the diffusion algorithms that always round down for heterogeneous  networks. They also  show that a better load balance can be obtained when the algorithm is allowed to run longer than $T$ steps.

In \cite{ABS12}, the authors propose an algorithm that achieves discrepancy of $2 d$ after $T$ steps for any graph. For every edge $e$ and step $t$, their algorithm calculates the number of tokens that should be sent over $e$ in $t$ such that the total number of tokens forwarded over $e$ (over the first $t$ steps) stays as close as possible to the amount of load that is sent by the continuous algorithm over $e$ during the first $t$ steps. However, their algorithm can  result in negative load when the initial load of any node is not sufficiently large and it has to calculate the number of tokens that the continuous algorithm sends over all edges.
Note that the algorithm presented in this paper has to simulate the continuous algorithm in order to calculate the load that it has to transfer over any edge, whereas our algorithms are much easier and they do not need any additional information, not even the load of their neighbors.

There are several publications that suggest randomized rounding schemes\cite{ABS12,BCFFS11, FGS10, AB12b, AB13, SS12} to convert the continuous load that is transferred over an edge  into discrete load.
The algorithm of~\cite{BCFFS11} calculates the number of {\em additional} tokens (the difference between the continuous flow forwarded over edges and the number of tokens forwarded by the discrete algorithm after rounding {\em down}). All additional tokens  are sent to randomly chosen neighbors.
Their discrepancy bounds after $T$ steps are~$\bigo(d\log \log n /\mu)$ and~$\bigo(d\sqrt{\log n}+\sqrt{d\log n\log d/\mu})$ for $d$-regular graphs,~$\bigo(d\log \log n)$ for expanders,~$\bigo(\log n)$ for hypercubes, and~$\bigo(\sqrt{ \log n})$ for tori.
The authors of \cite{SS12} present two randomized algorithms for the diffusive model. They achieve after $\bigo(T)$ time a discrepancy of $\bigo(d^2\sqrt{\log n})$ by first sending $\floor*{\load(u) / (d+1)}$ tokens to every neighbor and itself, and afterwards by distributing the remaining tokens randomly. Additionally, they provide an algorithm which achieves after $\bigo(T)$ time a discrepancy of $\bigo(\sqrt{d\log n})$ by rounding the flow sent over edges randomly to the nearest integers which might cause negative loads.
For a comparison with our results see Table \ref{table:compare}.

\paragraph{Dimension exchange model.}
In the Dimension Exchange model, the nodes are only allowed to balance with one neighbor at a time.  Whereas for all diffusion algorithms considered so far the discrepancy in the diffusion model is at least $d$, dimension exchange algorithms are able to balance the load up to an additive constant. In \cite{FS09} the authors consider a discrete dimension exchange algorithm for the matching model. Every node~$i$ that is connected to a matching edge calculates the load difference over that edge. If that value is positive, the algorithm rounds it up or down, each with probability one half. This result is improved in \cite{SS12}, where the authors show that a constant final discrepancy can be achieved within~$\bigo(T)$ steps for regular graphs in the random matching model, and constant-degree regular graphs in the periodic matching (balancing circuit) model.

\paragraph{Rotor-router walks.}

Originally introduced in~\cite{PDDK96}, the \emph{rotor-router walk} model was  employed by Jim Propp for derandomizing the random walk, thereby frequently appearing under the alternative names of {\em Propp machines} and {\em deterministic random walks}~\cite{CDST07,DF09,FS10,KKM12}.
In the rotor-router model, the nodes send their tokens out in a round-robin fashion. It is assumed that the edges of the nodes are cyclically ordered, and that every node is equipped with a rotor which points to one of its edges. Every node first sends one token over the edge pointed to by the rotor. The rotor is moved to the next edge which will be used by the next token, and so on, until all tokens of the node have been sent
out over one of the edges. It has been shown that the rotor walks capture the average behaviour of random walks in a variety of respects such as  hitting probabilities and hitting times.
The rotor-router model can be used for load balancing, and directly fits into the framework we consider in this paper.
The authors of \cite{journals/corr/ShiragaYKY13} obtain rough bounds on the discrepancy, independently of \cite{RSW98}.
 In \cite{AB13}, the authors study a lazy version of the rotor-router process (half of the edges are self-loops) for load balancing. They prove that the rotor walk falls in the class of  {\em bounded-error diffusion processes} introduced in \cite{FGS10}. Using this fact they obtain discrepancy bounds of~$\bigo(\log^{3/2} n)$ and~$\bigo(1)$ for the hypercube and $r$-dimensional torus with $r=\bigo(1)$, respectively, which improve the best existing bounds of~$\bigo(\log^2 n)$ and~$\bigo(n^{1/r})$ in this graph class.

\subsection{Model and Notation}
\label{sec:mod}

In this section we define our general model, which applies to both classes of studied algorithms.

The input of the load-balancing process is a symmetric and directed regular graph $G=(V,\original{E})$ with $n$ nodes. Every node has out-degree and in-degree $\dego$. We have $m\in\natN$ indivisible tokens (workload) which are arbitrarily distributed over the nodes of the network. For simplicity of notation only, we assume that initially $G$ does not contain multiple edges. In general, the nodes of the graph may be treated as anonymous, and no node identifiers will be required.
The time is divided into synchronized steps. Let $\loadv_t=(\load_t(1)\kdots \load_t(n))$ be the {\it load vector} at the beginning of step $t$, where $\load_t(u)$ corresponds to the load of node $u$ at the beginning of step $t$. In particular, $\initloadv$ denotes the initial load distribution and $\avg$ is the real-valued vector resulting when every node has achieved average load, $\avg(u) = \frac1n \sum_{u\in V}\load_1(u)  \equiv \bar\load$, for all $u\in V$. Note that the total load summed over all nodes does not change over time.
The {\it discrepancy} is defined as
the load difference between the node with the \emph{highest load} and the node with the \emph{lowest load}.
We will denote by $K$ the maximal initial discrepancy in $\initloadv$, \ie, $K=\max_{u\in V} \initload(u) -\min_{u\in V} \initload(u)$.
 The {\it balancedness} of an algorithm is defined as the gap between the node with the highest load and the \emph{average load}.

In order to introduce self-loops, we transform $G$ into the graph $\new{G}=(V,\original{E} \cup E^\circ)$ by adding  $\degs$ self-loops to every node. For a fixed node $u\in V$, let $ E^\circ_u=\{e_1(u,u)  \kdots e_{\degs}(u,u) \}$ denote the set of self-loops of  $u$ and let $\original{E}_u$ denote the original edges of $u$ in $G$. We assume $\degs = \bigo(\dego)$. We can now define $\new{E}_u = E^\circ_u \cup \original{E}_u \mbox{ \ and \ } E^\circ=\bigcup_{u\in V}  E^\circ_u.$ In the following, we call $G$ the {\em original graph} and $\new{G}$ the {\em balancing graph}. The edges $E^\circ$  are called self-loop edges  of
$\new{G}$ and $\original{E}_u$ are the original edges. We remark again that the balancing graph is introduced for purposes of analysis, only, and is completely transparent from the perspective of algorithm design.
We also define $\dega=\dego + \degs$ as the degree of any node in $\new{G}$. $N(u)$ is the set of direct neighbors of $u$ in $\new{G}$, \ie, it contains all neighbors of $u$ in $G$ including $u$ itself (because of the self-loops).

For a fixed edge  $e=(u,v)\in \new{E}$ let $\f{t}(e) $ be the number of tokens which $u$ sends to $v$ in step  $t$. In particular, let $f_t(u,u) = \sum_{e\in E_u^\circ}f_t(e)$.
Let $F_t(e)$ denote the cumulative load sent from $u$ to $v$ in steps $1 \kdots t$, i.e,
$\cumflow_t(e)=\sum_{\tau \leq t} \flow_\tau(e).$
For a fixed node $u$, we define $f^{out}_t(u)=\sum_{v\in N(u)} f_t(u,v)$  to be the number of tokens (or flow) leaving $u$ in step $t$.
The incoming flow is then defined as $f^{in}_t(u)=\sum_{\{v : u\in N(v)\}} f_t(v,u).$
We define the cumulative incoming and original flows $\out_t(u)$ and $\inc_t(u)$ accordingly, and
$\outv{t}$ and $\inv{t}$  are defined as the vectors of the (cumulative) original and incoming flow.
We will say that  edge $e=(u,v)$ received $x$ tokens when node $u$ sends $x$ tokens to  $v$.

\section{Results for Cumulatively Fair Balancers}\label{s:fair}

In this section we present a general class of algorithms, called {\em cumulative fair} algorithms, and analyze their discrepancy after
 $T=\bigo\left((\log K + \log n)/\mu\right)$ many time steps. Note that $T$ is the balancing time of the continuous diffusion algorithm if the initial discrepancy is $K$,  and $\dego$ is the number of original edges (non-self-loops) of each node.




We call an algorithm {\em cumulatively fair} if the  flow that is sent out over every edge of $u$ (including the self-loops) up to step $t$ can differ by at most  $\delta$.

\begin{definition}\label{fair}
Let $\delta$ be a constant.
An algorithm is called  {\em cumulatively} $\delta$-{\em fair}  if for all
$t\in\natN$, $u\in V$
\begin{itemize}
\item every edge  $\original{e}\in E_u^+$ receives at least $\floor*{x_t(u)/\dega} $ many tokens.
\item all original edges  $\original{e_1}, \original{e_2}\in E_u$ satisfy $|\cumflow_{t}(\original{e_1}) - \cumflow_{t}(\original{e_2}) |\leq \delta$.
\end{itemize}
\end{definition}
Note that {\em round-fair} algorithms (defined in \cite{RSW98}) are not necessarily
cumulatively $\delta$-fair for any fixed $\delta$.
\begin{observation}
The algorithms
 ${\bf \algoRF}\left(\floor*{\sfrac{x}{\dega}}\right)$ and
 ${\bf \algoRF}\left([{\sfrac{x}{\dega}]}\right)$
are $0$-cumulatively fair.
Furthermore, \RR is cumulatively $1$-fair.
\end{observation}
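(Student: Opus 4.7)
For both $\algoRF(\floor*{x/\dega})$ and $\algoRF([x/\dega])$, I would first observe that the second clause of Definition \ref{fair} is immediate: by construction, both algorithms send the same number of tokens along every original edge of a node $u$ during a single time step, so the cumulative flows $\cumflow_t(e)$ on original edges coincide exactly, giving $\delta = 0$. The first clause is likewise trivial for original edges, since both $\floor*{x_t(u)/\dega}$ and $[x_t(u)/\dega]$ are at least $\floor*{x_t(u)/\dega}$. The substantive work lies in verifying the first clause for the self-loops: writing $x_t(u) = q\dega + r$ with $q = \floor*{x_t(u)/\dega}$ and $r \in \{0, \ldots, \dega-1\}$, I would subtract the total amount sent over original edges and then check that the residual can be allocated so that every self-loop receives at least $q$ tokens.

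For $\algoRF(\floor*{x/\dega})$ this residual equals $\degs q + r$, which is trivially enough to give each of the $\degs$ self-loops $q$ tokens. For $\algoRF([x/\dega])$, the rounding-down case is identical; the rounding-up case leaves only $\degs q + r - \dego$ tokens for the self-loops, and I expect this to be the main obstacle. It succeeds only via the standing model assumption $\degs \geq \dego$: in the rounding-up regime $r \geq \ceil*{\dega/2} = \ceil*{(\dego + \degs)/2} \geq \dego$, so the residual is at least $\degs q$ and can again be split giving each self-loop $q$ tokens. This is the one place in the argument where the augmentation by self-loops is essential.

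For the \RR, I would exploit the round-robin order directly. Since the rotor cycles through the $\dega$ edges in a fixed cyclic order and its state persists across time steps, after node $u$ has emitted a cumulative total of $M$ tokens each of its $\dega$ edges has received either $\floor*{M/\dega}$ or $\ceil*{M/\dega}$ tokens. Applying this with $M = \out_t(u)$ gives $|\cumflow_t(e_1) - \cumflow_t(e_2)| \leq 1$ for any two edges of $u$, and restricting to the original edges yields the second clause with $\delta = 1$. For the first clause, within a single step of load $x_t(u)$ the rotor completes at least $\floor*{x_t(u)/\dega}$ full cycles before stopping, so every edge of $u$ receives at least $\floor*{x_t(u)/\dega}$ tokens in that step.
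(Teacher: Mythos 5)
Your verification is correct; the paper states this as an Observation without any accompanying proof, and your direct check of both clauses of Definition~\ref{fair} (including the only non-trivial point, that the rounding-up case of $\algoRF([\sfrac{x}{\dega}])$ still leaves each self-loop at least $\floor*{x_t(u)/\dega}$ tokens precisely because $\degs\geq\dego$ forces $r\geq\dega/2\geq\dego$) is exactly the argument the authors treat as self-evident. The rotor-router part is likewise the standard cumulative round-robin property, so nothing further is needed.
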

For cumulatively fair balancers we show the following bound.
\begin{theorem}\label{the:prediscrepancyinreggraphs}
Let $G$ be any $d$-regular input graph and let $d^+$ is the degree of the balancing graph $G^{+}$ (including the self-loops).
Let $\delta$ be a constant.
Assume $A$ is a cumulatively $\delta$-fair balancer.
Then, after $\bigo\left(\log(K n)/\mu\right)$ time steps, $A$ achieves a discrepancy of
\begin{enumerate}[label=(\roman*),topsep=0pt, partopsep=0pt]
 \setlength{\itemsep}{1pt}%
    \setlength{\parskip}{0pt}
\item \label{en1:part1} $\bigo\left( (\delta+1)\cdot \dego \cdot
\sqrt{\log n/\mu} \right)$ for $\dega \ge 2\dego$.
\item \label{en1:part2}  $\bigo\left((\delta+1)\cdot\dego\cdot \sqrt{n}\right)$ for $\dega\ge 2\dego$.
\item \label{en1:part3}  $\bigo\left((\delta+1) \cdot\dego
\cdot{\log n/\mu} \right)$ for arbitrary $\dega\ge \dego + 1$.
\end{enumerate}

  \end{theorem}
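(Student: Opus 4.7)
The plan is to follow the Rabani--Sinclair--Wanka comparison approach, using the transition matrix $P$ of the random walk on $\new{G}$ (with $P_{uu}=\degs/\dega$ and $P_{uv}=1/\dega$ for each original neighbour $v\neq u$; by regularity $P$ is symmetric) and the associated continuous process $\xi_t := P^{t-1}\loadv_1$. Standard spectral-gap arguments give $\infnorm{\xi_T-\avg} = \bigo(1)$ after $T=\bigo(\log(Kn)/\mu)$ steps. Writing the discrete deviation $e_t:=\loadv_t-\xi_t$ and the per-step rounding error
\[r_\tau(u) \;:=\; \sum_{v\in N(u)} \bigl(f_\tau(v,u)-x_\tau(v)/\dega\bigr),\]
the recurrence $e_{\tau+1}=P e_\tau+r_\tau$ (with $e_1=0$) unrolls to $e_T=\sum_{\tau=1}^{T-1} P^{T-1-\tau}r_\tau$, and it suffices to bound $\infnorm{e_T}$.

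For bound~\ref{en1:part3}, the argument is essentially RSW's. Cumulative $\delta$-fairness combined with the per-edge floor condition yields the per-step estimate $\infnorm{r_\tau}\le \bigo((\delta+1)\dego)$: at each step, flows on original edges leaving a common node differ by at most $2\delta$ (taking discrete differences of the cumulative-fair constraint), and conservation $\sum_{e}(f_\tau-x_\tau/\dega)=0$ bounds the aggregate self-loop deviation by $\bigo(d\delta)$. Splitting $\sum_\tau P^{T-1-\tau}r_\tau$ at $\tau=T-L$ with $L=\bigo(\log n/\mu)$, the last $L$ terms contribute $\bigo(L(\delta+1)\dego)=\bigo((\delta+1)\dego\log n/\mu)$; the remaining terms have the form $P^s r_\tau$ with $s\ge L$, bounded via $\infnorm{P^s v}\le (1-\mu)^s\twonorm{v}$ on mean-zero $v$, and sum to $\bigo((\delta+1)\dego)$.

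The sharper bounds~\ref{en1:part1} and~\ref{en1:part2} exploit cumulative fairness over long time windows to gain a $\sqrt{L/\mu}$ dependence in place of $L$. The key structural observation --- valid when $\dega\ge 2\dego$ --- is that the cumulative rounding error $R_\tau:=\sum_{s\le\tau}r_s$ factors through the graph Laplacian. Writing $\Delta_\tau(v,u):=F_\tau(v,u)-\sum_{s\le\tau}x_s(v)/\dega$, cumulative $\delta$-fairness produces a common value $\alpha_\tau(v)$ such that $\Delta_\tau(v,u)=\alpha_\tau(v)+\bigo(\delta)$ across all original edges leaving $v$, and the compensating slack $-d\,\alpha_\tau(v)$ is absorbed by the $\degs\ge\dego$ self-loops without violating the floor condition precisely because $\dega\ge 2\dego$. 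This yields a decomposition
\[R_\tau \;=\; \dega\,(I-P)\,\alpha_\tau \;+\; E_\tau, \qquad \infnorm{E_\tau} = \bigo\bigl((\delta+1)\dego\bigr).\]
Substituting into the Abel identity $e_T=R_{T-1}-\sum_\tau P^{T-2-\tau}(I-P)R_\tau$ replaces one factor of $(I-P)$ acting on $R_\tau$ by the squared Laplacian $(I-P)^2$ acting on $\alpha_\tau$, and the spectral identity $\sum_{s\ge 0}\lambda^s(1-\lambda)^2=1-\lambda$ collapses the accumulated contribution to yield $\twonorm{e_T}=\bigo((\delta+1)\dego\sqrt n)$. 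This proves~\ref{en1:part2} via $\infnorm{\cdot}\le\twonorm{\cdot}$. For~\ref{en1:part1}, a further $1/\sqrt n$ factor is gained by transferring the $\ell_2$ bound to $\ell_\infty$ through $L=\bigo(\log n/\mu)$ additional mixing steps ($\|P^L\|_{2\to\infty}=\bigo(1/\sqrt n)$), at the cost of a Cauchy--Schwarz estimate on the residual $L$ rounds contributing $\bigo((\delta+1)\dego\sqrt{\log n/\mu})$.

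The main obstacle is proving the Laplacian decomposition with the time-uniform bound $\infnorm{E_\tau}=\bigo((\delta+1)\dego)$. Cumulative $\delta$-fairness does not directly constrain the absolute size of $\Delta_\tau(v,u)$ --- only its spread across original edges of $v$ --- and self-loop cumulative flows may in principle accumulate individual deviations of order $\tau$. The hypothesis $\dega\ge 2\dego$ is essential: it is precisely the condition under which the self-loops, subject only to the per-step floor, have enough collective room to absorb the slack $-d\,\alpha_\tau(v)$ while their individual deviations average to the fair bound. I expect this step, rather than the subsequent Abel-summation manipulation, to be the most technically involved part of the argument.
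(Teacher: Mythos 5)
Your overall strategy---compare with the lazy walk on $\new{G}$, use cumulative fairness to control the \emph{accumulated} rounding error uniformly in time, then reduce by summation by parts to differences $\randomwalk^{a+1}-\randomwalk^{a}$ of consecutive transition-matrix powers, bounded by $\bigo(1/\sqrt a)$ via laziness for claim~(i) and by eigenvalue telescoping times $\sqrt n$ for claim~(ii)---is essentially the paper's. The genuine gap sits exactly where you flag it. The paper does not use a Laplacian factorization; it first \emph{reformulates the algorithm} (Proposition~\ref{pro:SimulateCumFairness}) so that cumulative fairness holds on the self-loops as well, shifting the offending tokens into a per-node remainder bounded by $\dega$ at every step. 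With that device the error $\eps_t$ in the cumulative-flow recursion $\outv{t}=\initloadv+\randomwalk\outv{t-1}+\eps_t$ is uniformly bounded by $\delta\dega+r$, and a single differencing over a unit interval produces exactly the bounded vectors hit by $\randomwalk^{a+1}-\randomwalk^{a}$. Your substitute, $R_\tau=\dega(I-P)\alpha_\tau+E_\tau$, is a correct identity (indeed the self-loop total is forced by conservation, so it holds for any $\dega\ge\dego+1$, not only $\dega\ge 2\dego$), but $\alpha_\tau$ is both unbounded and time-varying: a node with persistent load below $\dega$ may legally send $0$ on every original edge forever, so $\alpha_\tau(v)=F_\tau(v,\cdot)-\sum_{s\le\tau}x_s(v)/\dega$ can drift like $-\tau$. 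Consequently your Abel identity leaves the boundary term $\dega(I-P)\alpha_{T-1}$, which is not $\bigo((\delta+1)\dego)$, and the claimed collapse via $\sum_s\lambda^s(1-\lambda)^2=1-\lambda$ is valid only for $\alpha_\tau$ constant in $\tau$. Repairing this needs a second summation by parts against the increments $\alpha_\tau-\alpha_{\tau-1}$ (which \emph{are} bounded by $\dega$) plus a mixing estimate to dispose of $(I-P)\randomwalk^{T-2}\alpha_{T-1}$; none of this appears in your sketch, and it is precisely the work that the paper's remainder transformation makes unnecessary.

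Two further points. First, $\dega\ge 2\dego$ is misattributed: it is not what makes the slack absorbable by self-loops, but what makes the walk lazy, which is where the paper actually uses it---laziness yields the bound $\max_w\sum_v|P_{a+1}(v,w)-P_a(v,w)|\le 24/\sqrt a$ behind claim~(i), and nonnegativity of the spectrum, needed so that $\sum_a|\lambda_2^{a+1}-\lambda_2^a|$ telescopes to at most $1$ in claim~(ii). Second, your ``Cauchy--Schwarz estimate on the residual $L$ rounds'' for claim~(i) is unsubstantiated: with only $\infnorm{\rho_\tau}=\bigo((\delta+1)\dego)$ per step, a direct bound gives $\bigo(\dego L)$, not $\bigo(\dego\sqrt L)$; the $\sqrt L$ saving again comes from applying the consecutive-difference estimate to the uniformly bounded accumulated error, which returns you to the missing step above. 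Your argument for claim~(iii) is fine and matches the paper's.
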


For constant $\delta$ and at least $d$ self-loops, the results of Claim \ref{en1:part1} of the above theorem show a better discrepancy after $T$ steps compared to the result of \cite{RSW98}. 
Claim \ref{en1:part2} provides an improvement for graphs with a bad expansion (small eigenvalue gap), such as cycles.




The rest of this section is devoted to the proof of Theorem~\ref{the:prediscrepancyinreggraphs}. The core idea of the proof is to regard the balancing process over several steps and to observe that the cumulative load of the nodes  is closely related to a random walk of all tokens.
The difference to the random walk can be bounded by a small corrective vector  depending on $\delta$.
We start by providing some additional definitions.

In the analysis, we change the way the tokens are kept at a node:
In addition to sending tokens over self-loop edges we allow a node to retain a remainder (of tokens) of size $r<\dega$ at every node. The reason for this is that the proof requires the cumulative fairness on all edges and not just on original edges.
One can show (Proposition \ref{pro:SimulateCumFairness}) that every cumulatively fair balancer can be transformed (by shifting tokens from self-loops to the remainder) into an algorithm guaranteeing cumulative fairness on all edges and that this transformed algorithm sends exactly the same load over original edges in every round.%

Let the remainder  $r_t(u)$ of node $u$ in step $t$  be the number of tokens of $u$ that will not participate in the load distribution over its original edges and self-loops. Then, $\xv_t=(r_t(1) \kdots r_t(n))$ denotes the {\it remainder vector} at step $t$, where $r_t(i)$ is the number of tokens kept by the $i$'th node of $G$ in step $t$. We will denote by $r$ the upper bound on the maximum remainder of an algorithm, satisfying $|r_t(u)|\le r \leq \dega$ for every time step $t$ and every $u\in V$. For the ease of notation, we assume that $r<d^{+}$.
Note that for all $u\in V$ and all steps $t$
\begin{equation}\label{eqn:inout} \initload(u) + \inc_{t-1}(u)=r_t(u) + \out_t(u).
\end{equation}
Moreover,
\begin{equation}\label{eqn:flowout} \load_t(u) = f_{t}^{out}(u)+ r_t(u).
\end{equation}

Let $\randomwalk$ denote the transition matrix of a random walk of $G^+$. Let
$\randomwalk(u,v)$ denote the one-step probability for the walk
to go from $u$ to $v$. Then  $\randomwalk(u,v)= 1/\dega$ if
$(u,v)\in \original{E}$, $\randomwalk(u,v)=\degs/\dega$ if  $u=v$, and $\randomwalk(u,v)= 0$ otherwise.

Let $\mu$ be the eigenvalue gap of $\mathbf{P}$, i.e., $
\mu = 1-\lambda_2$, where $\lambda_2$ is the second largest eigenvalue.
We define $\randomwalk^t$ to be the $t$-steps transition matrix, \ie, $\randomwalk^t=
\randomwalk\cdot \randomwalk^{t-1}$. We define the {\em steady-state} distribution as
$\randomwalk^\infty=\lim_{t\rightarrow \infty} \randomwalk^t$.
Note that $\forall u,v \in V$, $ \randomwalk^\infty(u,v)=\dega/(2|\new{E}|)=1/n.$
Observe that $ \transitionmatrix^\infty \cdot \initloadv=(\bar\load,\bar\load,\ldots, \bar\load).$
We can express $\randomwalk^t$ as $\randomwalk^t = \randomwalk^\infty + {\bf \Lambda} _t$, where $ {\bf \Lambda} _t$ is the error-term calculating the difference between $\randomwalk^t$
and the steady-state distribution.
Let $p \geq 1$.
The $p$-norm of a vector ${\mathbf r}$ is defined as
$\|{\mathbf r} \|_p=\left(\sum_{i=1}^n |r_i|^p \right)^{\frac{1}{p
}}.$
In particular, $\infnorm{{\mathbf r}}$ is defined to be $\max \{ |r_1| \kdots |r_n|\}$.

\medskip
We are ready to prove the main theorem of this section.
The core idea of the proof is to calculate the total deviation between any cumulatively fair balancer and a continuous process, similar to \cite{RSW98}. However, instead of comparing the two processes step-by-step as in \cite{RSW98}, the comparison is done over long time intervals similar to \cite{KP14}. This deviation is then connected to the value $\bigo({\log n}/{\mu})$.

%
\begin{proof}[Proof of Theorem \ref{the:prediscrepancyinreggraphs}]
Fix a node $u\in V$.
Note that by the definition of cumulatively $\delta$-fairness
and Proposition \ref{pro:SimulateCumFairness}
 we have for all
$(u,v) \in {E}_u^+$ that\
\begin{equation}\label{eqn:diffdelta2}
\hspace{1.5cm} \left| \cumflow_t(u,v) - {\out_t(u)}/{\dega} \right|\leq \delta.
\end{equation}
We will define a corrective vector which at time $t$ measures the difference between
the load the nodes sent over original edges at time $t$ and
the load the nodes should have sent over these edges in order to
ensure that every original edge received the exact same (continuous) load until time $t$.
Formally, we define  the $n$-dimensional  corrective vector $\del_{t,u}$ with $\delta_{t,u}(v)=F_t(v,u)-\out_t(v)/\dega$ for $v\neq u$ and
$\delta_{t,u}(u)=F_t(u,u)-\degs/\dega\cdot \out_t(u)$.
The entries of the corrective vector satisfy $|\delta_{t,u}(v)| \leq \delta $ for $v \in N(u) \setminus \{ u \}$, $|\delta_{t,u}(u)| \leq \degs\delta $, and $\delta_{t,u}(v)=0$ for $v \not\in N(u)$.
 Consequently,
 $\onenorm{\del_{t,u}}\leq \delta \dega$.
Then
we derive from (\ref{eqn:diffdelta2}) the following bound on the incoming cumulative load of node $u$
\begin{align} \inc_t(u)&\underset{def.}{=}\sum_{v \in N(u) }\cumflow_t(v,u)\notag
= \sum_{v \in N(u)\setminus\{u\} }\cumflow_t(v,u) + \cumflow_t(u,u)\notag\\
&\underset{ (\ref{eqn:diffdelta2})}{=}
\sum_{v\in N(u)\setminus\{u\} }\left(\frac{1}{\dega}\out_{t}(v)+ \delta_{t,u}(v)  \right)+ \frac{\degs}{\dega} \out_{t}(u)+\delta_{t,u}(u)\notag\\
 &= \sum_{v\in N(u)\setminus\{u\}}\frac{1}{\dega}\out_{t}(v) + \frac{\degs}{\dega}\out_{t}(u)  + \onenorm{\del_{t,u}} . \label{eqn:rewrtingin}
\end{align}

Rewriting (\ref{eqn:inout}) by introducing (\ref{eqn:rewrtingin}) we get:
\begin{equation}\label{eqn:rewrtinginvec}\out_t(u)=\initload(u)+ \sum_{v\in N(u)\setminus\{u\}}\frac{1}{\dega}\out_{t-1}(v) + \frac{\degs}{\dega}\out_{t-1}(u)+ \underbrace{\onenorm{\del_{t,u}}- r_t(u)}_{\varepsilon_t(u)}. \end{equation}

We have $\infnorm{\varepsilon_t(u)} \leq \bounderrorproof$.
Rewriting (\ref{eqn:rewrtinginvec}) in vector form, we obtain
$$\outv{t} = \initloadv + \randomwalk \cdot \outv{t-1} + \eps_t = \sum_{0 \leq \tau < t} \randomwalk^\tau\initloadv + \sum_{1 \leq \tau \le t} \randomwalk^{t-\tau} \cdot \eps_\tau.$$

For any $\hatt>0$, the number of tokens leaving node $u$  in the interval  $[t+1; t+\hatt]$ is $\outv{t+\hatt}(u)-\outv{t}(u)$. Hence,
 $$\outv{t+\hatt}-\outv{t}=\sum_{t \le \tau < t+\hatt}    \randomwalk^\tau \initloadv +
\sum_{1\leq \tau\leq t}(\randomwalk^{t+\hatt-\tau}-\randomwalk^{t-\tau})\cdot \eps_\tau +
\sum_{  t < \tau \le t+\hatt}\randomwalk^{t+\hatt-\tau}\cdot\eps_\tau.
$$
We set $t^*=t-4\tmi$, where $\tmi = 6\log n / \mu$, and substitute $\randomwalk^\tau = \randomwalk^\infty + {\bf \Lambda} _\tau$. We derive
\begin{align}
&\outv{t+\hatt}-\outv{t}=
\sum_{\mathclap{t \le \tau < t+\hatt}}  \randomwalk^\tau \initloadv +
\sum_{\mathclap{1\leq \tau\leq t^*}}(\randomwalk^{t+\hatt-\tau}-\randomwalk^{t-\tau})\eps_\tau +
\sum_{\mathclap{t^* < \tau\leq t}}(\randomwalk^{t+\hatt-\tau}-\randomwalk^{t-\tau})\eps_\tau +
\sum_{ \mathclap{ t < \tau \le t+\hatt}}\randomwalk^{t+\hatt-\tau}\eps_\tau \notag\\
&=
\hatt\randomwalk^\infty\initloadv
+ \sum_{\mathclap{t\le \tau< t+\hatt}}{\bf \Lambda} _\tau\initloadv \phantom{o}+
\sum_{\mathclap{1 \le \tau\leq t^*}}({\bf \Lambda}_{t+\hatt-\tau} - {\bf \Lambda} _{t-\tau})\eps_\tau\!+\!\sum_{\mathclap{t^* < \tau \leq t}}(\randomwalk^{t+\hatt-\tau} - \randomwalk^{t-\tau})\cdot \eps_\tau\!+\!\sum_{\mathclap{t <\tau \leq t+\hatt}}\randomwalk^{t+\hatt-\tau}\cdot \eps_\tau.
\label{diffinout}
\end{align}
%
In the following we use $ {\bf\bar\load} = \randomwalk^\infty\initloadv $.
\begin{align*}
\infnorm{ \sum_{t < \tau \leq t+\hatt }\loadv_\tau - \hatt\cdot {\bf\bar\load}}  &\underset{(\ref{eqn:flowout})}{=}  \infnorm{\left(\outv{t+\hatt} - \outv{t} +
\sum_{t < \tau \leq t+\hatt }\xv_\tau\right) - \hatt \randomwalk^\infty\initloadv}\\
&\underset{(\ref{diffinout})}{\leq}  \sum_{t \le \tau< t+\hatt}\infnorm{{\bf \Lambda}_\tau\initloadv}
\phantom{o} +\sum_{\mathclap{1\le\tau\leq t^*}}\left(\infnorm{{\bf \Lambda} _{t+\hatt-\tau} \eps_\tau}
+\infnorm{ {\bf \Lambda} _{t-\tau}\eps_\tau} \right)\\
&\phantom{oo}+\sum_{\mathclap{t^* < \tau \leq t}}\infnorm{(\randomwalk^{t+\hatt-\tau} - \randomwalk^{t-\tau})\eps_\tau }
\phantom{o}+\sum_{\mathclap{t <\tau \leq t+\hatt}}\infnorm{\randomwalk^{t+\hatt-\tau} \eps_\tau} \phantom{o}+ \sum_{\mathclap{t < \tau \leq t+\hatt} }\infnorm{\xv_\tau}.  
\end{align*}
By well-known properties of mixing in graphs (cf.\ claims (i) and (ii) of Lemma \ref{lem:boundmix} in the Appendix) it follows for $t \geq 16 \cdot\log(nK)/\mu$ that $\forall_{\tau \ge t} \infnorm{{\bf \Lambda} _\tau\initloadv} \le 2^{-4}$, and  moreover $\sum_{\tau \geq 4 \cdot  \tmi }\infnorm{{\bf \Lambda} _\tau \eps_\tau} \leq  n^{-4} \cdot \max_{\tau\geq 4 \cdot \tmi}\{\infnorm{\eps_\tau} \}.$
This results in
\begin{align*}
\infnorm{ \sum_{t < \tau \leq t+\hatt }\loadv_\tau - \hatt\cdot {\bf\bar\load}}
&\leq \hatt \cdot 2^{-4} \phantom{o}+2 \cdot n^{-4}  \max_{1\le\tau\leq t^* } \infnorm{\eps_\tau}
+\sum_{t^* < \tau \leq t}\infnorm{(\randomwalk^{t+\hatt-\tau} - \randomwalk^{t-\tau})\eps_\tau } \\
&\phantom{oo}+\hatt(\bounderrorproof)
\phantom{o}+ \hatt r  \\
&\leq \frac{\hatt}{4} + (\bounderrorproof)  +\sum_{t^* < \tau \leq t}\infnorm{(\randomwalk^{t+\hatt-\tau} - \randomwalk^{t-\tau})\eps_\tau }  +  \hatt(\delta\dega + 2r).
\end{align*}
Dividing the last equation by $\hatt$ yields
 \begin{equation}\label{eqn:theT}\infnorm{ \frac{\sum_{t < \tau \leq t+\hatt }\loadv_\tau}{\hatt} -  {\bf \bar\load}} \leq \frac{1}{4} +  (\delta\dega + 2r)+\frac{ (\bounderrorproof) +  \sum_{t^* < \tau \leq t}\infnorm{(\randomwalk^{t+\hatt-\tau} - \randomwalk^{t-\tau})\eps_\tau }}{\hatt}.
\end{equation}
%
%
%
%
%
In this way, in (\ref{eqn:theT}) we have derived a bound on the difference between the average load of a node during an interval of length $\hatt$ and the average load $\bar x$.

In the remainder we bound (\ref{eqn:theT}) for $\hatt =1$, which describes the load difference to the average. 
Fix an arbitrary $t \geq 16 \log(nK)/\mu$.
We define $P_t(u,w)$ to be the probability that a random walk following matrix $\randomwalk$, initially located at $u\in V$, is located at $w$ after $t$ time steps. We then obtain the following bound (see Appendix~\ref{sec:appendix23} for details):
\begin{equation}
 \infnorm{(\randomwalk^{t+1-\tau} - \randomwalk^{t-\tau}) \eps_\tau }
\leq (\bounderrorproof)  \cdot \max_{{ w}\in V} \sum_{v\in V} \left| P_{t+1-\tau}(v,w)-P_{t-\tau}(v,w) \right| . \label{eqn:boundtequals1}
\end{equation}
Combining \eqref{eqn:theT} and \eqref{eqn:boundtequals1} for $\hatt=1$, recalling that $t^* = t - 4 \tmi = t -24 \log n / \mu$, and introducing the notation $a = t- \tau$, we obtain:
\begin{equation}\label{eq:rwflow}
\infnorm{ \loadv_{t+1} - {\bf \bar\load}} \leq \frac{1}{4} + r + (\delta\dega + r)\bigg(2+ \sum_{a =0}^{24 \log n / \mu}\max_{{ w}\in V} \sum_{v\in V} \left| P_{a+1}(v,w)-P_a(v,w) \right|\bigg).
\end{equation}
Thus, the right-hand side of the above expression provides an asymptotic upper bound on the discrepancy at time $O(\log(nK)/\mu)$. It remains to provide an estimate of the sums which appear in the expression. These sums can be analyzed using techniques for bounding probability change (current) of a reversible random walk in successive time steps. In Appendix~\ref{sec:appendix23}, we provide three different ways of bounding the expression, leading directly to claims (i), (ii), and (iii) of the theorem.

We remark that it is not clear whether the obtained bounds on the right hand side of~\eqref{eq:rwflow} are asymptotically tight. For example, the question whether it may be possible to replace ``$\sqrt n$'' in claim (ii) by a term which is polylogarithmic in $n$ is an interesting open question in the theory of random walks on graphs (cf.~\cite{NPZ15} for some recent related results in the area).
\end{proof}

%
%

%
%
%

%
%

%

%



\section{Results for Good $s$-Balancers}\label{sec:goods}
In this section, we consider a subclass of cumulatively $1$-fair balancers that achieve a better discrepancy compared to the cumulatively fair balancers if the runtime is slightly larger than $T$.
Algorithms of this class are, by definition, a subclass of cumulatively $1$-fair balancers.
 Hence, Theorem~\ref{the:prediscrepancyinreggraphs} also applies to Good $s$-Balancing Algorithms.
A cumulatively $1$-fair balancer is also a \class if the algorithm is \emph{(i)} round-fair and \emph{(ii)} \emph{self-preferring}, \ie, if it favors self-loop edges over original edges.
As we will see in Theorem \ref{the:timecprime}, the ``more self-preferring'' an algorithm is, the faster it balances.

\begin{definition}
Assume $\delta$ is an arbitrary constant and $1\leq s\le \degs$.
An algorithm is called a {\em \class} if
if for
$t\in\natN$ and  $u\in V$ all edges of $u$ (including self-loops)  receive
$\floor*{\load_t(u)/\dega}$  many tokens in step $t$.
The remaining $e(u)=x_t(u)- d^{+}\cdot \floor*{\load_t(u)/\dega}$ have to be distributed over the edges such that

\begin{enumerate}[topsep=1pt, partopsep=0pt]
 \setlength{\itemsep}{1pt}%
    \setlength{\parskip}{0pt}
\item the algorithm is cumulatively $1$-fair
\item at least $\min\{s, e(u)\}$ self-loops receive
$\ceil*{\load_t(u)/\dega}$ many tokens. ($s$-self-preferring)
\item every edge  $\original{e}\in E_u^+$ receives  at most  $\ceil*{x_t(u)/\dega}$ many tokens.
\end{enumerate}

\end{definition}

Note that, by definition, every \class is round-fair, meaning
that every edge receives either $\floor*{x_t(u)/\dega}$ or $\ceil*{x_t(u)/\dega}$ many tokens in every round.

\begin{observation}
The algorithm ${\bf \algoRF}\left([\sfrac{x}{\dega}]\right)$
is a \classpara{(\dega -2d)} for $\dega > 2d$.
Furthermore, \RRstar is a \classpara{1}.
\end{observation}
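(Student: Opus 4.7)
The plan is to verify both claims by checking the three defining conditions of a \class{} directly from the algorithm descriptions: cumulative $1$-fairness on the original edges (Condition 1), the $s$-self-preferring property (Condition 2), and the ceiling upper bound $\lceil x_t(u)/\dega \rceil$ per edge (Condition 3). For both algorithms, Conditions 1 and 3 follow essentially by inspection of the algorithmic definitions, so the substantive work concerns Condition 2 with the claimed value of $s$.

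For ${\bf \algoRF}([x/\dega])$, I will exploit that every original edge receives exactly $[x_t(u)/\dega]$ tokens in every step. Since this quantity is common to all original edges of $u$, the cumulative sums over any two original edges coincide, yielding cumulative $0$-fairness (and hence $1$-fairness). For the self-preferring property, I will fix $u$ and step $t$, write $x_t(u) = q \dega + r$ with $0 \le r < \dega$ and $e(u) = r$, and split on the direction in which $[x_t(u)/\dega]$ rounds. In the round-down case $r < \dega/2$, each original edge carries only the floor $q$ and all $r$ extras can be placed on self-loops, so $r$ self-loops receive $q+1 = \lceil x_t(u)/\dega \rceil$, which trivially dominates $\min(\dega - 2d, r)$. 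In the round-up case $r \ge \dega/2$, the $d$ original edges together absorb $d$ of the extras and the remaining $r-d$ extras are placed on self-loops; the key inequality to verify is then $r - d \ge \min(\dega - 2d, r)$, which I expect to derive from $r > \dega/2$ together with $\dega > 2d$ via careful integer bookkeeping.

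For \RRstar, the distinguished self-loop receives $\lceil x_t(u)/\dega \rceil$ tokens by construction, which immediately yields the $1$-self-preferring property: when $e(u) \ge 1$ the designated self-loop already witnesses one ceiling edge, and when $e(u) = 0$ Condition 2 is vacuous. The remaining tokens are distributed deterministically via a rotor-router over the $d$ original edges and the $\dego - 1$ ordinary self-loops. Here I will invoke the standard rotor-router invariant that the pointer visits edges in strict round-robin, which guarantees that the cumulative token counts on any two of these edges differ by at most $1$ at all times; this gives both cumulative $1$-fairness on the original edges and Condition 3 (each such edge receives either the floor or the ceiling).

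The main obstacle I anticipate is the integer-arithmetic check in the round-up case of ${\bf \algoRF}([x/\dega])$: I must verify $r - d \ge \min(\dega - 2d, r)$ for every admissible integer $r$ with $r > \dega/2$. The hypothesis $\dega > 2d$ is precisely what secures $r - d > 0$ throughout this range, and handling the smallest admissible values of $r$ (where the bound is tight) is what pins down the constant to exactly $\dega - 2d$ rather than a weaker one.
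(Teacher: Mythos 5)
Your treatment of \RRstar is sound: since $\dega = 2\dego$ for that algorithm, the special self-loop receives $\lceil \load_t(u)/(2\dego)\rceil = \lceil \load_t(u)/\dega\rceil$ tokens by construction, which witnesses $1$-self-preference whenever $e(u)\geq 1$, and the round-robin invariant of the rotor on the remaining $2d-1$ edges gives both cumulative $1$-fairness and the floor/ceiling bounds. The paper states this observation without proof, so there is nothing to compare against; your verification is the natural one.

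The gap is exactly where you anticipated it, in the round-up case of ${\bf \algoRF}\left([\sfrac{x}{\dega}]\right)$, and it does not close. Write $\load_t(u) = q\dega + r$ with $0\le r<\dega$. When $r \ge \dega/2$, each of the $d$ original edges receives $q+1$, so only $q\degs + (r-d)$ tokens remain for the $\degs$ self-loops and at most $r-d$ of them can receive the ceiling $q+1$. The inequality you need, $r - d \ge \min\{\dega - 2d,\, r\}$ for every integer $r$ with $\dega/2 \le r < \dega$, is false: take $d=1$, $\dega = 5$, $\load_t(u)=3$, so $r = e(u) = 3$ and $[3/5]=1$; the single original edge absorbs one token, only two self-loops can receive $\lceil 3/5\rceil = 1$, yet $\min\{\dega-2d,\,e(u)\} = \min\{3,3\}=3$. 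Worse, whenever $\lceil \dega/2\rceil \le r < \dega - 2d$ (possible once $\dega$ is large relative to $d$) the requirement degenerates to $r-d\ge r$, which no integer bookkeeping can rescue. What your case analysis actually yields is that ${\bf \algoRF}\left([\sfrac{x}{\dega}]\right)$ is a \classpara{(\lceil \dega/2\rceil - d)}: in the round-down case $r$ self-loops receive the ceiling and $r\ge\min\{s,e(u)\}$ trivially, while in the round-up case $r - d \ge \lceil \dega/2\rceil - d$. This parameter is positive exactly when $\dega > 2d$ and is $\Omega(d)$ when $\dega\ge 3d$, so the consequences the paper draws from the observation (via Theorem~\ref{the:timecprime}, whose running time scales as $\dego/s$) survive unchanged; but the constant $\dega-2d$ in the statement cannot be established by your argument, and the counterexample above shows it cannot be established at all.
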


The following theorem shows that \classs achieve a smaller discrepancy of $O(d)$ if they are allowed to run longer than $T$ steps.

\begin{theorem}\label{the:timecprime}
Let $G$ be any $d$-regular input graph and let $d^+$ is the degree of the balancing graph $G^{+}$.
Let $\delta$ be an arbitrary constant and let $1\leq s \leq (d^+-d)$. Assume
$A$ is a \class.
Then $A$ achieves a discrepancy of
$((2\delta+1)\dega+4\degs)$ after time $\bigo\left(\log K +  \dego s^{-1}\cdot \log^2n/\mu \right).$
\end{theorem}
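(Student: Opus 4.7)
Plan. The proof complements Theorem~\ref{the:prediscrepancyinreggraphs} (which applies to Good $s$-balancers because they are cumulatively $1$-fair) with a combinatorial potential argument that extracts the benefit of the self-preferring condition during subsequent phases of balancing.

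First, I would apply Theorem~\ref{the:prediscrepancyinreggraphs} directly: after $T_1 = \bigo(\log(Kn)/\mu)$ steps the discrepancy is at most $D_1 = \bigo(\dego \sqrt{\log n/\mu})$, so the $\ell_2$ potential $\Phi_t := \twonorm{\loadv_t-\avg}^2$ satisfies $\Phi_{T_1} = \bigo(n \cdot \dego^2 \log n/\mu)$. The target discrepancy is $D^\star = \Theta((2\delta+1)\dega + 4\degs)$, which corresponds to $\Phi$ of order $n(D^\star)^2$. From the state reached by Theorem~\ref{the:prediscrepancyinreggraphs} the remaining task is to drive $\Phi$ from $\Phi_{T_1}$ down to $n (D^\star)^2$.

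The heart of the argument is a per-phase drop lemma. Setting $\hatt = \Theta(\log n/\mu)$, I claim that as long as $\infnorm{\loadv_t - \avg} > D^\star$,
\begin{equation*}
\Phi_{t+\hatt} \;\leq\; \bigl(1 - c\, s/\dega\bigr)\,\Phi_t,
\end{equation*}
for some absolute constant $c>0$. Granted this lemma, the number of phases to shrink $\Phi$ from $\Phi_{T_1}$ below $n(D^\star)^2$ is $\bigo((\dega/s)\log n)$, giving total additional time $\bigo((\dega/s)\log^2 n / \mu)$. Since the standing assumption $\degs \geq \dego$ yields $\dega = \Theta(\degs)$, the combined runtime matches the stated $\bigo(\log K + \degs s^{-1}\log^2 n/\mu)$.

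To prove the per-phase lemma I would revisit the identity derived inside the proof of Theorem~\ref{the:prediscrepancyinreggraphs},
\begin{equation*}
\loadv_{t+\hatt} - \avg \;=\; \bigl(\randomwalk^{\hatt}\loadv_t - \avg\bigr) \;+\; \sum_{t < \tau \leq t+\hatt} \randomwalk^{\,t+\hatt-\tau}\,\eps_\tau,
\end{equation*}
and exploit additional structure of the error vector $\eps_\tau$ for Good $s$-balancers. For $\hatt = \Theta(\log n/\mu)$ the spectral gap forces $\infnorm{\randomwalk^\hatt \loadv_t - \avg} \leq n^{-\Omega(1)} \infnorm{\loadv_t - \avg}$, so that term is negligible. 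The rounding sum I would decompose into (i) a symmetric round-off contribution whose $\ell_\infty$ impact after mixing is $\bigo(D^\star)$ per coordinate (this is essentially Theorem~\ref{the:prediscrepancyinreggraphs} read pointwise, and accounts for the $\bigo(\dega+\degs)$ residual target), and (ii) a self-preferring bias: at each step at least $\min(s, e_\tau(u))$ excess tokens remain on self-loops of $u$ instead of flowing to its original edges. Aggregated across the phase and propagated through $\randomwalk^{t+\hatt-\tau}$, this bias acts as a restoring drift aligned with $-(\loadv_t - \avg)$, supplying the multiplicative contraction $(1 - c\, s/\dega)$.

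The main obstacle is making rigorous that the per-step bias averages into a restoring drift proportional to $\loadv_t - \avg$. The bias at step $\tau$ is a function of $e_\tau(u) = \load_\tau(u) \bmod \dega$, which is not directly correlated with $\load_\tau(u) - \bar\load$. The resolution is to average over the phase and use two conjunct structural constraints of a Good $s$-balancer: round-fairness (each edge receives $\floor*{\load/\dega}$ or $\ceil*{\load/\dega}$) and the ceiling cap (no edge exceeds $\ceil*{\load/\dega}$). Together these force the self-preferring bias to be spread uniformly across self-loops and prevent it from being concealed at a single edge, so that after applying $\randomwalk^{t+\hatt-\tau}$ and summing over $\tau$ the aggregated bias behaves as a negative multiple of $\loadv_t - \avg$ up to an $\bigo(\dega + \degs)$ residual. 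Combined with the spectral vanishing of the continuous term and cumulative $1$-fairness (already exploited in Theorem~\ref{the:prediscrepancyinreggraphs} to control the spatial spread of $\eps_\tau$), this closes the contraction estimate and hence the theorem.
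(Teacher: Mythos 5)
Your overall architecture (bootstrap with Theorem~\ref{the:prediscrepancyinreggraphs}, then run $\bigo((\dega/s)\log n)$ phases of length $\Theta(\log n/\mu)$ each) matches the paper's time accounting, but the engine you propose for the per-phase progress is genuinely different from the paper's and, as stated, does not work. The paper does not use a global $\ell_2$ potential with multiplicative contraction; it introduces a \emph{family} of thresholded linear potentials $\phi_t(c)= \sum_{v}\max\{\load_t(v)-c\dega,0\}$ (and a mirror family $\phi'_t(c)$), proves their monotonicity and a quantified per-crossing drop by a combinatorial token-coloring argument, and then runs an induction over levels $c\in\{c_0,\dots,c_1\}$ in which the drop at level $c$ is controlled via the ``layer mass'' $s_t(c)=\phi_t(c)-\phi_t(c+1)$. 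That layering is not a cosmetic choice: it is what converts the self-preferring effect, which is an \emph{additive} gain of at most $\min\{s,e(u)\}$ tokens per node per step, into a multiplicative factor $s/\dega$ --- because within one layer each node contributes at most $\dega$ to $s_t(c)$ and at least $s$ to the drop.

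The concrete gap in your proposal is the contraction lemma $\Phi_{t+\hatt}\leq(1-c\,s/\dega)\Phi_t$ conditioned only on $\infnorm{\loadv_t-\avg}>D^\star$. This is false. Take a configuration in which $n-1$ nodes sit at deviation $\Theta(\dega)$ from $\avg$ (so $\Phi_t=\Theta(n\,\dega^2)$, and such near-stationary states exist by the paper's own lower-bound constructions) and a single node deviates by slightly more than $D^\star$. Your lemma demands a drop of $c(s/\dega)\Phi_t=\Theta(s\,\dega\,n)$ in one window, but the only guaranteed progress comes from the one offending node, whose contribution to $\Phi_t$ is $\bigo((D^\star)^2)=\bigo(\dega^2)$ --- off by a factor of $n$. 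The underlying reason is the one you yourself flag as ``the main obstacle'': the self-preferring bias at $u$ is a function of the remainder $e(u)=\load_\tau(u)\bmod\dega$ (it vanishes entirely when $\load_\tau(u)$ is a multiple of $\dega$) and is bounded by $s$ regardless of how large $\load_\tau(u)-\bar\load$ is, so it cannot supply a restoring drift \emph{proportional} to $\loadv_t-\avg$; no amount of averaging over a phase fixes a per-node effect that is additive and uncorrelated with the deviation. To close the argument you would need to replace the global $\ell_2$ contraction by level-restricted progress --- which is precisely the role of $\phi_t(c)$, Lemmas~\ref{potential}--\ref{potentialprime}, Observations~\ref{obs:fastdrop}--\ref{obs:fastdropprime}, and the phase induction~\eqref{eqn:a1}--\eqref{eqn:a5} in the paper. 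Note also that a two-sided bound is needed (the paper runs the symmetric argument with $\phi'$), which your sketch leaves implicit.
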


We note that large values of $s$ ($s = \Omega(d)$) increase the speed of the balancing process.
In Theorem \ref{the:lowerbound} (Section~\ref{sec:lowerbounds}) we provide a lower
bound of $\Omega(d)$  on the discrepancy of any stateless algorithms, the bound is independent of the balancing time. Since the class of \classs contains many stateless algorithms, this also means that the bound on the discrepancy in Theorem~\ref{the:timecprime} cannot be improved without further restrictions on the class.




The remainder of this section is devoted to a proof of Theorem \ref{the:timecprime}. We first define the following two families of potential functions,  parameterized by $c$:\\
\vspace{-1cm}
\begin{center}

\begin{tabular}{ccc}
\parbox[t]{5.8 cm}{
$$\phi_t(c)= \sum_{v\in V}{  \max\{ \load_t(v) -c\dega,0 \} } \text { and}$$ }		 &
		\parbox[t]{4cm}{
$$\phi'_t(c)= \sum_{v\in V}{  \max\{  c\dega+s-\load_t(v),0 \} }.$$
}
\end{tabular}
\end{center}

\vspace{-0.5cm}
To show the theorem we use Equation \ref{eqn:theT} of the proof of Theorem \ref{the:prediscrepancyinreggraphs}, 
to derive Lemma \ref{cor:dropbelowtheline}. The lemma
shows that, for every node $u$, there exists a time step $t_u$ in which the load of the node has a certain distance to
$\bar x$. We will then show that the time step $t_u$ results in a potential drop of $\phi_t(c)$ for $u$ if the load of $u$ was larger than $c\dega$.

The following lemma gives a bound on the required length of the time interval so that there is a step $t_u$ where $u$
has a load which is sufficiently close to $\bar x$. The required time is expressed as a fraction of $\log n/ \mu$. The lemma shows a tradeoff (parameter $\lambda$) between the required time and the load difference of $u$ to $\bar x$.
The proof can be found in Appendix~\ref{sec:missingstrict}.

\begin{lemma}\label{cor:dropbelowtheline}
Consider any cumulatively $\delta$-fair balancer with remainder bounded by $r$, and an initialization of the load balancing process with average load $\bar x$ and initial discrepancy $K$. Let $\lambda \geq 0$, and  let $t \geq 16 \cdot\log(nK)/\mu$, and let $\hatt= O\left(d\log n /(\mu\cdot (\lambda+1))\right).$ Then we have: \\$\text{For all } u\in V\text{ there exists a time step } t' \in [t+1; t+\hatt]\text{ such that }\load_{t'}(u) \leq  \bar\load + \delta\dega+2r+1/2 + \lambda.$
\end{lemma}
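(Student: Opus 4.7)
The plan is to derive this lemma directly as a consequence of Equation~(\ref{eqn:theT}) established in the proof of Theorem~\ref{the:prediscrepancyinreggraphs}, applied at time $t$ with window length $\hatt$, together with a trivial pointwise bound on the residual current term and the fact that the minimum value of a finite sequence is at most its average.

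First I would recall that Equation~(\ref{eqn:theT}) gives, for every node $u$,
\[
\left|\frac{1}{\hatt}\sum_{t < \tau \leq t+\hatt}\load_\tau(u) - \bar\load\right|
\;\leq\; \frac{1}{4} + (\delta\dega + 2r) + \frac{(\delta\dega+r) + S}{\hatt},
\]
where $S := \sum_{t^* < \tau \leq t}\infnorm{(\randomwalk^{t+\hatt-\tau} - \randomwalk^{t-\tau})\eps_\tau}$ and $t^* = t - 4\tmi = t - 24\log n/\mu$. The next step is to bound $S$ crudely: since $\randomwalk$ is stochastic, $\infnorm{(\randomwalk^{a} - \randomwalk^{b})\eps_\tau} \leq 2\infnorm{\eps_\tau} \leq 2(\delta\dega+r)$, and the sum has $4\tmi = O(\log n/\mu)$ terms, so $S = O(\dega\cdot\log n/\mu)$.

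Next, I would choose $\hatt$ large enough that the last fraction is at most $\lambda + \tfrac{1}{4}$. Concretely, setting $\hatt = C\cdot\dego\log n/(\mu(\lambda+1))$ for a sufficiently large absolute constant $C$ gives $((\delta\dega+r)+S)/\hatt \leq \lambda + \tfrac{1}{4}$, since $\dega = O(\dego)$ and $r = O(\dega)$. Substituting back yields
\[
\frac{1}{\hatt}\sum_{t<\tau\leq t+\hatt}\load_\tau(u) \;\leq\; \bar\load + \tfrac{1}{2} + \delta\dega + 2r + \lambda.
\]

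Finally, since the minimum of a finite real sequence is at most its average, there must exist some $t' \in [t+1;\,t+\hatt]$ with $\load_{t'}(u) \leq \bar\load + \delta\dega + 2r + \tfrac{1}{2} + \lambda$, which is exactly the claim. There is no real obstacle here beyond bookkeeping: the technical heavy lifting (the spectral estimate of $\infnorm{{\bf\Lambda}_\tau\initloadv}$ and the splitting at $t^*$) was already absorbed into Equation~(\ref{eqn:theT}), so the lemma is essentially an averaging argument on top of the bound from Theorem~\ref{the:prediscrepancyinreggraphs}.
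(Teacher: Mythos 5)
Your proposal is correct and follows essentially the same route as the paper: the paper's own proof also starts from Equation~(\ref{eqn:theT}) in the form of~(\ref{eqn:lasteqnofkeylemma}) (which already absorbs the crude bound $\sum_{t^* < \tau \leq t}\infnorm{(\randomwalk^{t+\hatt-\tau} - \randomwalk^{t-\tau})\eps_\tau} \leq 2(\delta\dega+r)(t-t^*) = 8\tmi(\delta\dega+r)$ that you derive from stochasticity of $\randomwalk$), then picks $\hatt = \Theta\left(d\log n/(\mu(\lambda+1))\right)$ so that the residual fraction is at most $(\lambda+1)/4$, and concludes by the same minimum-is-at-most-average argument. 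Your bookkeeping of the constants ($1/4 + 1/4 = 1/2$, and $\dega = O(\dego)$, $r \leq \dega$ to absorb everything into the constant $C$) matches the paper's.
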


The next lemma bounds the one-step potential drop of $\phi_t(c)$ occurring on every node which has a load of more than $c\dega$ at time $t-1$ and has a smaller load of at most $c\dega +s$ at time $t$. The proof can be found in Appendix~\ref{sec:missingstrict}.

\begin{lemma}[Monotonicity of potential]\label{potential}
Let $A$ be a \class.
The potential $\phi_t(c)$
is non-increasing in time
and it satisfies: $\phi_t(c) \leq \phi_{t-1}(c) - \sum_{u\in V} \Delta_t(c,u)$, where:
$$
 \Delta_t(c,u)= \left\{\begin{array}{ll}
	 \min\{\load_{t-1}(u) , c\dega + s\}-\max\{\load_{t}(u), c\dega\} & \text{if } x_{t-1}(u) > x_{t}(u)\text{ and } x_{t-1}(u)  > c\dega \\
	 & \phantom{o..}\text{and } x_{t}(u)  < c\dega+s  \\
         0 & otherwise\end{array}\right.
$$
\end{lemma}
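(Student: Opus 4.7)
For $c\in\mathbb{Z}$ set $L := c\dega$. For every directed edge $e\in\new{E}$, write $h(e) := f_{t-1}(e) - c$, and for every node $u$ define $H^+(u):=\sum_{e\text{ into }u}\max\{0,h(e)\}$ and $H^-(u):=\sum_{e\text{ into }u}\max\{0,-h(e)\}$, so that $H^+(u)-H^-(u) = \load_t(u) - L$. Since each node has $\dega$ incoming edges, subadditivity of $\max\{0,\cdot\}$ yields $\max\{0,\load_t(u)-L\}\leq H^+(u)$, with slack exactly $\min\{H^+(u),H^-(u)\}$. The monotonicity $\phi_t(c)\leq\phi_{t-1}(c)$ will then follow from the outgoing-side identity $\sum_{e\text{ out of }v}\max\{0,f_{t-1}(e)-c\} = \max\{0,\load_{t-1}(v)-L\}$: writing $\load_{t-1}(v) = k\dega + e(v)$, round-fairness forces $e(v)$ edges to carry $k+1$ tokens and the remaining $\dega - e(v)$ to carry $k$, and a one-line case split on $k\geq c$ vs.\ $k<c$ produces the identity.

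For the strict-drop part, the plan is to show $\mathrm{slack}(u)\geq\Delta_t(c,u)$ whenever $\load_{t-1}(u)>L$ and $\load_t(u)<L+s$ (the other conditions being automatic). The decisive input is the $s$-self-preferring property: all $\degs$ self-loops at $u$ carry $k$ or $k+1$ tokens where $k:=\lfloor\load_{t-1}(u)/\dega\rfloor\geq c$, and at least $\sigma\geq\min\{s,e(u)\}$ of them carry $k+1$. Hence the self-loops contribute $0$ to $H^-(u)$ and exactly $\degs(k-c)+\sigma$ to $H^+(u)$. Combining this with the edge-wise bound $\max\{0,c-f(e)\}\geq c-f(e)$ summed over the $\dego$ original incoming edges yields the master inequality $H^-(u)\geq L + \degs(k-c) + \sigma - \load_t(u)$.

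I will then conclude by a short case analysis on whether $\load_t(u) < L$ or $\load_t(u)\geq L$, splitting each further into the three regimes $\load_{t-1}(u)-L\in(0,s]$, $(s,\dega]$, $(\dega,\infty)$. When $\load_t(u) < L$ the slack equals $H^+(u)\geq\degs(k-c)+\sigma$, which in the three regimes is at least $e(u) = \Delta_t(c,u)$, $s = \Delta_t(c,u)$, and $\degs\geq s = \Delta_t(c,u)$ respectively. When $\load_t(u)\in[L,L+s)$ the slack equals $H^-(u)$, and the master inequality matches $\Delta_t(c,u)$ in each regime using $\sigma\geq\min\{s,e(u)\}$ together with the hypothesis $\degs\geq s$ (needed in the third regime). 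The main obstacle is this six-way bookkeeping: each individual inequality is routine, but the definition $\Delta_t(c,u) = \min\{\load_{t-1}(u),L+s\}-\max\{\load_t(u),L\}$ is precisely calibrated so that the $s$-self-preferring guarantee is tight in each regime, and one must verify case-by-case that $\sigma$ and $k-c$ together cover exactly the load drop inside the band $[L, L+s]$.
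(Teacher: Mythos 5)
Your argument is correct and is essentially the paper's own proof in a different notation: the paper tracks the surplus above level $c$ via a black/red token-colouring (at most $c$ black tokens per edge, black count at a node capped at $c\dega$), which is exactly your edge-surplus decomposition $\max\{0,f_{t-1}(e)-c\}$, your outgoing identity, and your slack $\min\{H^+(u),H^-(u)\}$; both proofs invoke round-fairness for monotonicity and the $s$-self-preferring property at the same point to lower-bound the self-loop contribution by $\degs(k-c)+\min\{s,e(u)\}$. Your arithmetic case analysis just makes explicit the bookkeeping that the paper's recolouring rule performs implicitly.
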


The following observation extends Lemma \ref{potential} to intervals $[t,t']$. It estimates
the potential drop of $\phi_t(c)$ for nodes which have a load
$\ge c \dega$ at time $t$ and a load
$\le c\dega$ during one time step of the interval. The observation follows directly from Lemma~\ref{potential}; we omit its proof.

\begin{observation}\label{obs:fastdrop}
Let $A$ be a \class and let $t\le t'$ be two fixed time steps. Denote by $U$ the subset of nodes  such that for all $u\in U$  $\load_t(u) \ge c\dega +1$ and there exists a moment of time $t_u\in[t, t']$ such that $\load_{t_u}(u)  \le c\dega$. Then $ \phi_{t'}(c) \leq \phi_t(c)-\sum_{u\in U} \max \{  s, \load_t(u) - c\dega \}. $
\end{observation}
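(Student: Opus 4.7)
The plan is to prove Observation~\ref{obs:fastdrop} by telescoping the single-step inequality of Lemma~\ref{potential} over the interval $[t+1,t']$ and then isolating, for each $u\in U$, its own contribution to the accumulated drop on the right-hand side. Summing Lemma~\ref{potential} and using $\Delta_\tau(c,v)\ge 0$ to discard the contributions of nodes outside $U$ gives
\begin{equation*}
\phi_{t'}(c) \;\le\; \phi_t(c) \;-\; \sum_{u\in U}\Bigl(\sum_{\tau=t+1}^{t'}\Delta_\tau(c,u)\Bigr),
\end{equation*}
so it remains to produce, for each $u \in U$, a per-node lower bound $\sum_{\tau=t+1}^{t'}\Delta_\tau(c,u) \ge \max\{s,\,x_t(u)-cd^+\}$.

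To estimate the inner sum for a fixed $u\in U$, I would track the trajectory $(x_\tau(u))_{t\le \tau\le t_u}$, where $t_u$ is the first time in $[t+1,t']$ with $x_{t_u}(u)\le cd^+$. A convenient bookkeeping device is the clamped trajectory $y_\tau := \max\{\min\{x_\tau(u),\,cd^++s\},\,cd^+\}$, which lies in the window $[cd^+,\,cd^++s]$. A short case analysis on the three clauses defining $\Delta_\tau(c,u)$ yields the pointwise inequality $\Delta_\tau(c,u) \ge y_{\tau-1}-y_\tau$, which telescopes to
\begin{equation*}
\sum_{\tau=t+1}^{t_u}\Delta_\tau(c,u) \;\ge\; y_t-y_{t_u} \;=\; \min\bigl\{x_t(u)-cd^+,\,s\bigr\}.
\end{equation*}
This already accounts for the statement whenever $x_t(u)-cd^+\le s$.

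The main obstacle is upgrading this $\min$-bound to the $\max$-bound demanded by the statement: the clamped telescoping only captures the portion of the drop that actually occurs inside the window $[cd^+,\,cd^++s]$, so the argument above only handles the case $x_t(u)-cd^+ \le s$ directly. For the remaining regime $x_t(u)-cd^+>s$, I would try to recover the missing mass $x_t(u)-cd^+-s$ from the structural properties of a \class: whenever $x_\tau(u)$ sits strictly above $cd^++s$ and decreases, the round-fair and $s$-self-preferring rules force a predictable amount of flow to leave $u$, which I would charge either against later $\Delta$-events of $u$ as the load eventually traverses the window on its way to $cd^+$, or against $\Delta$-events at the neighbours that receive those tokens. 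Combined with the monotonicity of $\phi$, this accounting should in principle close the gap; making the charge precise is the step I expect to be the real technical work that the paper's ``follows directly from Lemma~\ref{potential}'' glosses over.
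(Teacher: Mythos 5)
Your telescoping argument --- summing Lemma~\ref{potential} over $[t+1,t']$, discarding the nodes outside $U$ via $\Delta_\tau(c,v)\ge 0$, and lower-bounding $\sum_\tau \Delta_\tau(c,u)$ by the telescoped clamped trajectory $y_\tau$ --- is correct and is exactly the omitted argument the paper has in mind when it says the observation ``follows directly from Lemma~\ref{potential}''. It yields $\phi_{t'}(c)\le \phi_t(c)-\sum_{u\in U}\min\{s,\,x_t(u)-cd^+\}$. The part of your write-up you should drop is the attempt to upgrade this to the printed $\max\{s,\,x_t(u)-cd^+\}$: that version of the statement is false, and the ``$\max$'' is a typo for ``$\min$''. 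To see that no charging scheme can close the gap, take a configuration in which exactly one node $u$ has $x_t(u)=cd^++1$ and every other node has load at most $cd^+$, with $s\ge 2$; then $\phi_t(c)=1$, yet if $u$ later dips to $cd^+$ the claimed inequality would force $\phi_{t'}(c)\le 1-s<0$, contradicting $\phi_{t'}(c)\ge 0$.

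That the intended statement is the $\min$ version is confirmed by how the observation is invoked in the proof of Theorem~\ref{the:timecprime}: there a node with $x_t(u)\in[cd^+,cd^++s]$ is credited with a potential drop of $x_t(u)-cd^+$ and a node with $x_t(u)>cd^++s$ with a drop of exactly $s$, i.e.\ with $\min\{s,\,x_t(u)-cd^+\}$ in both cases; the same typo occurs symmetrically in Observation~\ref{obs:fastdropprime}. So once the target is corrected, your first two displayed inequalities constitute a complete proof, matching the paper's intended route, and the speculative accounting in your final paragraph is chasing a statement that cannot hold.
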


The potential defined in Lemma \ref{potential} bounds the number of tokens above certain thresholds. Now we use $\phi'_t(c)$  to show symmetric results measuring the number of `gaps' below certain thresholds. The proof of Lemma~\ref{potentialprime} is very similar to that of Lemma~\ref{potential}, and we provide it for completeness in Appendix~\ref{sec:missingstrict}.

\begin{lemma}\label{potentialprime}
Let $A$ be a \class.
The potential $\phi'_t(c)$
is non-increasing in time and it
 satisfies: $\phi'_t(c) \leq \phi'_{t-1}(c) - \sum_{u\in V} \Delta'_t(c,u)$, where:
$$
 \Delta'_t(c,u)=\left\{\begin{array}{ll}
	 \min\{\load_{t}(u) , c\dega + s\}-\max\{\load_{t-1}(u), c\dega\} & \text{if } x_{t-1}(u) < x_{t}(u)\text{ and } x_{t-1}(u)  < c\dega + s\\
	 & \phantom{o..}\text{and } x_{t}(u)  > c\dega  \\
         0 & otherwise\end{array}\right.
$$


\end{lemma}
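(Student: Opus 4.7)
The plan is to carry out an argument parallel to the proof of Lemma~\ref{potential}, with the roles of ``excess above $c\dega$'' and ``deficit below $c\dega+s$'' interchanged. Three steps are needed.

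\textbf{Step 1 (Per-node bound).} For each fixed node $u$, the change in its individual contribution to $\phi'_t(c)$ in one step equals
\[
\max\{c\dega + s - x_t(u),\,0\} \;-\; \max\{c\dega + s - x_{t-1}(u),\,0\}.
\]
I would carry out a direct case analysis on the positions of $x_{t-1}(u)$ and $x_t(u)$ relative to the two endpoints $c\dega$ and $c\dega+s$ of the strip. This shows that the quantity above is at most $-\Delta'_t(c,u)$ in every case, with equality precisely when $u$'s load strictly increases from a value $< c\dega + s$ to a value $> c\dega$ (the three conditions defining $\Delta'_t(c,u)$); whenever $\Delta'_t(c,u)=0$, the per-node change is nonpositive or pays for itself against cases of the opposite kind.

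\textbf{Step 2 (Global flow argument).} Summing these per-node bounds over $u\in V$ gives the desired inequality only if no node contributes a spurious positive change that offsets the savings. To rule this out, I would invoke the good $s$-balancer structure edge by edge. Round-fairness implies that every edge (self-loops included) carries $\floor{x/\dega}$ or $\ceil{x/\dega}$ tokens, and the $s$-self-preference property implies that a loaded node with $x_{t-1}(v) > c\dega + s$ keeps $\ceil{x_{t-1}(v)/\dega}$ on at least $s$ of its self-loops, so its own post-step load does not drop into the strip $[c\dega, c\dega+s]$ from above in a single step. Matching each token arriving at a node $u$ whose new load lies in the strip against a corresponding outflow from some $v$ with $x_{t-1}(v) > c\dega+s$ shows that any rise of $u$'s load into the strip is paid for by a token leaving the excess region, not by a token leaving the deficit region.

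\textbf{Step 3 (Assembly).} Combining Steps~1 and~2 yields $\phi'_t(c)-\phi'_{t-1}(c) \leq -\sum_{u\in V}\Delta'_t(c,u)$, which is the claim. Monotonicity is then immediate, because $\Delta'_t(c,u)\geq 0$ by its definition.

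\textbf{Main obstacle.} The chief technical difficulty is Step~2, where the asymmetry of a \class becomes visible: self-preference favors loaded nodes in keeping their load \emph{high}, which is exactly the property that makes $\phi_t(c)$ well-behaved, but for $\phi'_t(c)$ one needs the complementary assertion that deficits do not grow. I expect the argument to go through by a careful edge-by-edge flow inspection, using that the incoming flow at any node $u$ equals $\sum_{v\in N(u)} f_t(v,u)$ with each $f_t(v,u)\in\{\floor{x_{t-1}(v)/\dega},\ceil{x_{t-1}(v)/\dega}\}$, and verifying that no combination of admissible local choices at the sources can produce a spurious gain in deficit; the bookkeeping is delicate but follows the same template as the proof of Lemma~\ref{potential}.
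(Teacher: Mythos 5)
Your overall strategy --- mirroring the proof of Lemma~\ref{potential} with the roles of excess and deficit interchanged --- is the right instinct, and it is indeed what the paper does, but as written the argument has a genuine gap concentrated in Step~2. The concrete structural claim you rely on there, namely that a node $v$ with $\load_{t-1}(v) > c\dega + s$ ``does not drop into the strip $[c\dega, c\dega+s]$ from above in a single step,'' is false. Self-preference only controls what $v$ retains on its self-loops: it guarantees that the self-loop contribution to $\load_t(v)$ is at least $c\degs + s$, but $\load_t(v)$ also includes the inflow $\sum_{w} f_t(w,v)$ over original edges, which can be zero (e.g.\ when every neighbor of $v$ has load below $\dega$ and sends $\floor*{\load_{t-1}(w)/\dega}=0$ to $v$). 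Hence $\load_t(v)$ can be as small as $c\degs + s = c\dega + s - c\dego$, i.e.\ $v$ can fall into the strip or straight through it. Since this claim is your only mechanism for ruling out ``spurious positive changes,'' and since Step~1 explicitly defers exactly those cases (the clause ``pays for itself against cases of the opposite kind'' is not a per-node bound --- it is the entire difficulty), the proposal does not close. Note also that the per-node inequality asserted at the start of Step~1 cannot hold pointwise: a node whose load falls from above $c\dega+s$ to below it has a strictly positive change in its contribution to $\phi'_t(c)$ while $\Delta'_t(c,u)=0$.

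The paper sidesteps per-node bookkeeping entirely with a token-colouring conservation argument: each node keeps exactly $\min\{\load_t(u), c\dega+s\}$ tokens black and the rest red; round-fairness together with $s$-self-preference lets one route the black tokens so that at most $c$ of them cross any original edge, the surplus black tokens riding on $s' = \max\{\min\{\load_t(u)-c\dega, s\},0\}$ self-loops. Consequently at most $\min\{\load_t(u), c\dega+s\}$ black tokens ever arrive at a node, no black token is ever recoloured red, and $\phi'_t(c) = (c\dega+s)n$ minus the number of black tokens is automatically non-increasing --- the scenario where a heavily loaded node crashes into the strip is harmless because its black tokens are conserved, not because the crash cannot happen. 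The quantitative drop $\Delta'_t(c,u)$ then comes from counting, at a node $u$ with $\load_{t-1}(u) = c\dega + s - i$, the $\min\{i,s\}$ self-loops that carry at most $c$ tokens and therefore ``trap'' arriving red tokens, which must be recoloured black under the second colouring rule. To salvage your flow-matching outline you would essentially have to reconstruct this conservation argument; the per-node decomposition of Step~1 does not survive on its own.
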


Before proving the lemma, we remark that 
the potential admits a drop at node $u$ at time $t$ (\ie, $\Delta'_t(c,u)\geq 1$) for every node $u$ such that $\load_{t-1}(u) \leq  c\dega$ and $\load_{t}(u) \geq  c\dega+1$, for any algorithm which is at least $1$-self-preferring.
Again, the following observation follows directly from Lemma~\ref{potentialprime}; we omit its proof.

\begin{observation}\label{obs:fastdropprime}
Let $A$ be a \class and let $t\le t'$ be two fixed time steps. Denote by $U$ the subset of nodes  such that for all $u\in U$  $\load_t(u) < c\dega +s$ and there exists a moment of time $t_u\in[t, t']$ such that $\load_{t_u}(u)  \ge c\dega+s$. Then $ \phi'_{t'}(c) \leq \phi_t(c)-\sum_{u\in U} \max \{  s, c\dega + s - \load_t(u)   \}. $
\end{observation}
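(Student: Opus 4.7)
The plan is to telescope Lemma \ref{potentialprime} over the interval $(t, t']$, which gives
$$\phi'_{t'}(c) \;\le\; \phi'_t(c) \;-\; \sum_{\tau=t+1}^{t'}\sum_{v\in V}\Delta'_\tau(c,v),$$
and then to argue separately, for each $u \in U$, that the contributions attributable to $u$ within this double sum total at least $\max\{s,\, c\dega + s - x_t(u)\}$. Since $\Delta'_\tau(c, v)$ is non-negative and indexed per node, these per-$u$ contributions can be added over $U$ without interference, even if the relevant steps coincide across different nodes.

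For a fixed $u \in U$, the natural choice is to let $\tau_u \in (t, t_u]$ be the first step at which $x_{\tau_u}(u) \ge c\dega + s$; then $x_{\tau_u - 1}(u) < c\dega + s$ and the load strictly increases at $\tau_u$, so the hypotheses of Lemma \ref{potentialprime} apply and yield $\Delta'_{\tau_u}(c, u) = (c\dega + s) - \max\{x_{\tau_u - 1}(u),\, c\dega\}$. This quantity equals exactly $s$ whenever $x_{\tau_u - 1}(u) \le c\dega$, and is strictly smaller otherwise.

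In the easy case $x_t(u) \le c\dega$, one combines $\Delta'_{\tau_u}$ with the analogous $\Delta'_\tau(c, u)$ contributions from earlier upward steps during which $u$'s load climbs through the band $(c\dega,\, c\dega + s)$. By the per-node form of $\phi'$, these contributions telescope to the full drop $\phi'_t(c, u) - \phi'_{\tau_u}(c, u) = c\dega + s - x_t(u) = \max\{s,\, c\dega + s - x_t(u)\}$. In the harder case $x_t(u) \in (c\dega,\, c\dega + s)$, the direct per-node drop is only $c\dega + s - x_t(u) < s$, and the plan is to use the $s$-self-preferring structure to force a cleaner crossing: while $x(u)$ lies in the band $[c\dega + 1,\, c\dega + s]$ the $s$ excess tokens are retained on self-loops, so the only way for the load to subsequently reach $c\dega + s$ is to first descend to a load at most $c\dega$ and then cross back upward through the band, at which moment the single-step bound yields $\Delta'_{\tau_u}(c, u) = s$ exactly.

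The main obstacle is justifying this structural step in the harder case: showing that under $s$-self-preferring the node cannot ``jump'' from inside the band $(c\dega, c\dega + s)$ directly across the threshold $c\dega + s$ in a single step without first visiting a load at most $c\dega$. This requires combining the $s$-self-preferring constraint on the outgoing flow at $u$ with cumulative $1$-fairness at its neighbors in order to bound the net inflow into $u$ while $x(u)$ lies in the band, and it mirrors the symmetric argument used to establish Observation \ref{obs:fastdrop}. Once this structural claim is in place, summing the per-node bounds over $u \in U$ yields the observation.
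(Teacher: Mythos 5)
Your telescoping skeleton is the right one (and is the argument the paper has in mind when it declares that the observation ``follows directly'' from Lemma~\ref{potentialprime}), but your per-node accounting is wrong, and it sends you chasing a claim that is not actually provable. The key point is that $\Delta'_\tau(c,u)$ is \emph{not} the drop of the per-node potential $\max\{c\dega+s-\load_\tau(u),0\}$; it is the increment of the load of $u$ \emph{clamped to the band} $[c\dega,\,c\dega+s]$. Writing $h(y)=\min\{\max\{y,c\dega\},\,c\dega+s\}$, one checks directly from the definition that $\Delta'_\tau(c,u)\ge h(\load_\tau(u))-h(\load_{\tau-1}(u))$ in every step (with equality when the three conditions of the lemma hold), so summing over $\tau\in(t,t_u]$ gives
$$\sum_{\tau=t+1}^{t_u}\Delta'_\tau(c,u)\;\ge\;(c\dega+s)-h(\load_t(u))\;=\;\min\{s,\;c\dega+s-\load_t(u)\}.$$
In your ``easy case'' $\load_t(u)\le c\dega$ the telescoped sum is therefore exactly $s$, not $c\dega+s-\load_t(u)$: the portion of the climb from $\load_t(u)$ up to $c\dega$ contributes nothing, because of the $\max\{\load_{\tau-1}(u),c\dega\}$ clipping inside $\Delta'_\tau$. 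In your ``harder case'' the structural claim you would need --- that a node inside the band cannot cross $c\dega+s$ without first descending to $\le c\dega$ --- is false: $s$-self-preference constrains only the \emph{outgoing} flow of $u$, whereas the crossing is driven by the incoming flow from neighbors, which can be large whenever the neighbors are heavily loaded. So neither branch delivers $\max\{s,\,c\dega+s-\load_t(u)\}$.

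The resolution is that the statement as printed contains a typo: the bound should read $\min\{s,\,c\dega+s-\load_t(u)\}$ (and $\phi'_t(c)$, not $\phi_t(c)$, on the right-hand side); the symmetric Observation~\ref{obs:fastdrop} has the same $\max$/$\min$ slip. This is confirmed by how the observations are actually invoked in the proof of Theorem~\ref{the:timecprime}: there a node with $\load_t(u)\in[c\dega,c\dega+s]$ is credited with a potential drop of $\load_t(u)-c\dega$ and a node with $\load_t(u)>c\dega+s$ with a drop of ``exactly $s$'' --- i.e.\ with $\min\{s,\load_t(u)-c\dega\}$. With $\min$ in place of $\max$, your telescoping closes the proof with no case distinction and no structural appeal to self-preference beyond what Lemma~\ref{potentialprime} already encodes, and summing over $u\in U$ is legitimate exactly as you argue, since the terms $\Delta'_\tau(c,v)$ are nonnegative and indexed per node.
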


\vspace{-0.2cm}
\medskip

The main idea of the rest of the proof is the following. We will consider the potential functions $\phi_{t}(c)$
for decreasing values of $c$ and analyze the time $T_c$ it takes to decrease the potentials
 $\phi_{t}(c)$.
The time bound of  Theorem \ref{the:timecprime} is then the sum of the times  $T_c$ for
suitably chosen values of $c$. A symmetrical argument can be used to bound
$\phi'_{t}(c)$. Details of the arguments of the proof are provided in Appendix~\ref{sec:missingstrict}.


%


\section{Lower Bounds}\label{sec:lowerbounds}
We start by showing that the cumulative fairness bounds we introduce cannot be completely discarded when improving upon the discrepancy gaps from~\cite{RSW98}. 
Note that a round-fair balancer is not necessarily cumulatively
$\delta$-fair for any constant $\delta$.
In the following we show that there are round-fair balancers which have a discrepancy of at least $\Omega( \diam(G)\cdot d)$. The proofs of this section are deferred to Appendix~\ref{sec:missinglower}.
\begin{theorem}\label{lower1}
Let $G$ be a $d$-regular graph. There exists an initial distribution of tokens and a round-fair balancer $A$,
such that $A$ cannot achieve a discrepancy better than $(c\cdot \diam(G)\cdot d)$, for some positive constant $c>0$.
\end{theorem}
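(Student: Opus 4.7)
The plan is to construct, for a given $d$-regular graph $G$ with $D := \diam(G)$, an initial load configuration together with an adversarial round-fair balancer whose joint behavior maintains an $\Omega(\dego \cdot D)$ gap in load between two distinguished vertices at every time step. First I would fix $u, v \in V(G)$ with $\mathrm{dist}_G(u,v) = D$, let $\ell(w) := \mathrm{dist}_G(u,w)$ denote the BFS distance of $w$ from $u$, and set the initial load
\[
x_1(w) \;=\; M - \dego \cdot \ell(w)
\]
for a large integer $M$ whose residue modulo $\dega$ is tuned so that the adversarial rule below is well-defined and the drift analysis that follows is clean. Note that the initial discrepancy is then exactly $\dego \cdot D$.

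The adversarial round-fair balancer $A$ I would define acts as follows: at each step $t$, each node $w$ distributes its $r_t(w) := x_t(w) \bmod \dega$ ``ceiling'' slots $\lceil x_t(w)/\dega \rceil$ across its $\dega$ edges in the following priority order: first, to edges going to BFS-uphill neighbors (those $w'$ with $\ell(w') = \ell(w) - 1$); second, to self-loops; last, to downhill and same-level edges. Intuitively, this rule pumps back-flow toward $u$ and throttles the outward flow to $v$, counteracting diffusion. The heart of the argument is a direct expansion of $x_{t+1}(w)$ in terms of the floors and ceilings on all incident edges from both endpoints, which for residues $r_t(w) \in (\dego, \dega - \dego)$ yields
\[
x_{t+1}(w) \;=\; x_t(w) + (a_w - b_w),
\]
where $a_w$ and $b_w$ count the BFS-downhill and BFS-uphill neighbors of $w$ respectively. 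A separate but analogous computation for $r_t(w) = 0$ shows zero drift regardless of $(a_w, b_w)$. Along any shortest $u$--$v$ path, internal vertices have $a_w = b_w = 1$ and thus no drift; with $M \equiv 0 \pmod{\dega}$, the endpoints $u$ and $v$ have residue $0$ and no drift either, so the load gradient along the path is preserved and the discrepancy stays at $\dego \cdot D$.

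The main obstacle is that for a generic $d$-regular $G$, off-path vertices may have $a_w \neq b_w$ together with a nonzero residue, in which case the drift accumulates over time. I see two possible resolutions. As a concrete simpler instance already exhibiting the theorem, take $G = C_n$ with $n \equiv 0 \pmod 4$, $d = 2$, $\dega = 4$, and $M \equiv 0 \pmod{\dega}$: the residues alternate between $0$ and $2$ along the cycle, every $r = 2$ node is internal with $a = b = 1$, and the whole configuration is a strict fixed point of $A$ with permanent discrepancy $n = \Theta(\diam(G) \cdot \dego)$. For arbitrary $G$, one augments the initial load at off-path vertices by a bounded correction chosen to cancel the drift $a_w - b_w$ (globally consistent because $\sum_w (a_w - b_w) = 0$ by double counting BFS edges between adjacent layers), verifying that such corrections shift $x_t(u) - x_t(v)$ by at most a constant and that the adversarial priority rule remains feasible (enough ceiling slots are available at every node to cover its priority list). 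The most delicate part is this feasibility check, which depends on the arithmetic of $M \bmod \dega$ and the per-node counts of uphill edges in the BFS layering of $G$.
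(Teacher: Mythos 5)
Your overall strategy -- loads that decrease linearly in the BFS distance $\ell(w)$ from one endpoint of a diametral pair, with slope $\Theta(d)$ -- is the same starting point as the paper's, but the mechanism you use to keep this configuration stationary has a genuine gap. You verify stationarity node-by-node via a drift computation $x_{t+1}(w)=x_t(w)+(a_w-b_w)$, which only vanishes when a node has equally many uphill and downhill neighbours (or residue $0$). For an arbitrary $d$-regular graph the BFS layering will contain many vertices with $a_w\neq b_w$, and this is precisely the case the theorem must cover; the cycle instance does not discharge it. Your proposed repair -- a bounded one-time correction to the initial loads at off-path vertices -- does not obviously work: a persistent nonzero per-step drift accumulates linearly in time no matter how the initial load is shifted, so the correction can only help if it changes the residues in a way that alters the drift formula itself, and whether the residues of all nodes can be simultaneously placed in the required windows (each node's residue constrains what it sends to \emph{all} neighbours, hence every neighbour's drift) is exactly the feasibility question you leave open. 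The observation $\sum_w(a_w-b_w)=0$ only re-expresses conservation of total load and gives no local cancellation.

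The paper avoids this entirely by working at the level of edges rather than nodes: it sets $f_0(v_1,v_2)=\min(b(v_1),b(v_2))$ for every edge, where $b(\cdot)$ is the BFS distance. This flow is symmetric, $f_0(v_1,v_2)=f_0(v_2,v_1)$, so inflow equals outflow on every single edge and the loads (which are $\approx \dega\, b(v)$) never change -- there is no drift to cancel and no case analysis on residues or on $a_w$ versus $b_w$. Round-fairness is immediate because the flows on the edges incident to any fixed node differ by at most $1$. If you want to salvage your argument, the cleanest fix is to abandon the priority-rule adversary and instead specify the per-edge flows directly as a symmetric function of the two endpoints' layers, then read off the loads from the flows; otherwise you must actually carry out the global residue-tuning argument, which is the hard part you have deferred.
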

The following bound shows that the stateless algorithms we design are asymptotically the best possible in terms of eventual discrepancy. Namely, any stateless algorithm is not able to achieve a discrepancy better than $cd$, for some constant $c$. This also means that the bound on the discrepancy, presented in Theorem~\ref{the:timecprime}, cannot be improved in general for the class of \classs.
\begin{theorem}\label{the:lowerbound}
Let $A$ be an arbitrary deterministic and stateless algorithm.
For every even $n$, there exists a $d$-regular graph and an initial load distribution such that $A$ cannot achieve discrepancy better than $c d$, for some positive constant $c>0$.
\end{theorem}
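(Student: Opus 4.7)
The plan is to construct, for any deterministic stateless algorithm $A$, a $d$-regular graph $G$ on $n$ vertices and an initial load vector that is a fixed point of $A$ with discrepancy $\Omega(d)$; any such fixed point is preserved by $A$ indefinitely, so the bound follows. Since $A$ is deterministic and stateless, it is fully described by a single function $\sigma:\mathbb{N}\to\mathbb{N}^{d^+}$ with $\sum_{j}\sigma_{j}(x)=x$: a node currently carrying load $x$ sends $\sigma_{j}(x)$ tokens along its $j$-th incident edge under a fixed cyclic ordering of that node's edges.

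I would take $G$ to be bipartite with parts $A$ and $B$ of size $n/2$, where each part is a vertex-transitive $(d-1)$-regular graph (\eg, a circulant Cayley graph on $\mathbb{Z}_{n/2}$ with a symmetric connection set) and $A$ is joined to $B$ by a perfect matching; each vertex is equipped with $\degs$ self-loops. When labeling the $d^{+}$ incident edges at each vertex I would (i) assign the same index $i^{*}$ to the cross-matching edge at every vertex, (ii) reserve a fixed block of indices for self-loops, and (iii) label the intra-part edges rotation-invariantly by their Cayley directions. If every node of $A$ carries load $\alpha$ and every node of $B$ carries load $\beta$, a direct inflow--outflow calculation---the Cayley-graph analogue of the check that a uniform load is a fixed point---shows that the intra-part and self-loop contributions cancel exactly at every vertex, so the full fixed-point equation reduces to the single scalar identity $\sigma_{i^{*}}(\alpha)=\sigma_{i^{*}}(\beta)$.

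It then suffices to exhibit an index $i^{*}$ and loads $\alpha,\beta\in\mathbb{N}$ with $|\alpha-\beta|\ge c d$ and $\sigma_{i^{*}}(\alpha)=\sigma_{i^{*}}(\beta)$. Fix a base load $L$ and an interval $[L,L+K]$; since $\sum_{j}(\sigma_{j}(L+K)-\sigma_{j}(L))=K$, some coordinate $j$ has total increment at most $K/d^{+}$ over the interval. If $\sigma_{j}$ is monotone on this interval, it takes at most $K/d^{+}+1$ distinct integer values at $K+1$ integer arguments, so by pigeonhole some value is attained at $d^{+}$ different arguments, hence at two arguments at distance at least $d^{+}-1\ge d$, producing the required collision. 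If $\sigma_{j}$ is non-monotone on the interval, then a local decrement of $\sigma_{j}$ already forces a collision which I would amplify by iterating the pigeonhole on a shifted, widened subinterval.

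The main obstacle is the degenerate case in which every coordinate $\sigma_{j}$ increases strictly on every sufficiently long window; combined with $\sum_{j}\sigma_{j}(x)=x$ this forces $\sigma$ to route essentially the entire load through a single coordinate and makes the remaining coordinates identically zero. Against such a ``pass-through'' stateless $A$ the fixed-point construction above is replaced by a cycle $C_{n}$ with self-loops and a concentrated initial load: $A$ then merely rotates the load along one edge direction per step and never reduces the initial $\Omega(d)$ discrepancy. Any residual parity or divisibility obstructions in $n$ are absorbed by padding $G$ with isolated, perfectly balanced $d$-regular components, which do not affect the discrepancy.
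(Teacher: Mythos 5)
Your overall strategy is sound and genuinely different from the paper's: you build a symmetric two-part graph so that the fixed-point condition collapses to a single scalar equation $\sigma_{i^{*}}(\alpha)=\sigma_{i^{*}}(\beta)$, and then hunt for a collision of one coordinate of $\sigma$ at two loads that are $\Omega(d)$ apart. The reduction itself is correct (the intra-part and self-loop flows do cancel under a direction-consistent labeling of a Cayley graph with symmetric connection set, and the adversary is indeed free to decide which local index is the matching edge). The gap is in the collision lemma. Your pigeonhole only works under the assumption that the low-increment coordinate $\sigma_{j}$ is \emph{monotone} on $[L,L+K]$: monotonicity is what turns ``few distinct values'' into ``some level set is a long interval.'' Without it, $\sigma_{j}$ can take many distinct values on $[L,L+K]$ even though $\sigma_{j}(L+K)-\sigma_{j}(L)$ is small, and all of its repeated values could a priori occur at pairwise distance $<d$. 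Your fallback for the non-monotone case (``a local decrement already forces a collision which I would amplify\dots'') is not an argument: a single decrement $\sigma_{j}(x)>\sigma_{j}(x+1)$ gives no collision at distance $\geq d$, and no mechanism for the amplification is given. Moreover, the ``degenerate pass-through case'' you isolate is not the logical complement of the cases you do handle -- the negation of ``$\sigma_{j}$ is monotone on the interval'' is not ``every coordinate increases strictly on every long window,'' and the latter condition does not imply that $\sigma$ routes all load through one coordinate. So the case analysis does not cover all stateless algorithms. The lemma is true and can be repaired without any monotonicity assumption: on $[0,N]$ one has $\sum_{j}\sum_{x\le N}\sigma_{j}(x)=N(N+1)/2$, so some coordinate $j$ has average value at most $N/(2d^{+})$, hence at least half of its $N+1$ values lie in $\{0,\dots,\lfloor N/d^{+}\rfloor\}$; pigeonholing these shows some value is attained at roughly $d^{+}/2$ arguments, two of which are at distance $\Omega(d^{+})=\Omega(d)$. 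With that substitution your construction goes through (modulo the minor parity bookkeeping for the existence of the $(d-1)$-regular parts, which you already flag).

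For comparison, the paper's proof is far more economical and avoids the collision lemma entirely: it places load $\ell=\lfloor d/2\rfloor-1$ on every vertex of a $\lfloor d/2\rfloor$-clique and $0$ elsewhere, observes that a node with load $\ell$ can send positive flow along at most $\ell$ of its edges, and lets the adversary's edge labeling route all of those possibly-positive values to clique neighbours. The configuration is then a fixed point by a one-line conservation argument, with discrepancy $\ell=\Theta(d)$. Your approach, once the collision lemma is fixed, is more general in spirit (it exhibits a two-valued stationary load profile for \emph{any} stateless rule and makes explicit exactly which property of $\sigma$ is being exploited), but it pays for that with the extra combinatorial lemma and the graph-existence technicalities that the clique construction sidesteps.
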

Our final lower bounds concern variants of the \RR.
The next theorem shows that for a graph without self-loops (\ie, $G=\new{G}$)
the best possible discrepancy of the \RR is at least $c\cdot d \cdot \varphi'(G)$, where $\varphi'(G)$ is the odd girth of graph $G$, \ie, the length of the shortest odd length cycle over all nodes of $G$. This gives for  an odd-length cycle of $n$ nodes a discrepancy of at least $c\cdot n$ for some constant $c$.
\begin{theorem}\label{lowerbound:rotorrouter}
Let $G$ be any $d$-regular and non-bipartite graph, and let $\new{d}=d$. Then, there exists an initial load distribution and direction of the rotors  such that \RR cannot achieve discrepancy better than $(c \cdot d\varphi(G))$, for some positive constant $c>0$, where $(2\varphi(G)+1)$ is the odd girth of $G$.
\end{theorem}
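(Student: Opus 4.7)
The plan is to exhibit a specific initial load distribution and rotor orientation on $G$ that forces the \RR to maintain a persistent ``parity-based'' imbalance of magnitude $\Omega(d\cdot \varphi(G))$ forever. The intuition is that on a bipartite graph with $\new{d}=d$, the \RR reduces to a deterministic variant of the simple random walk and exhibits two-periodic oscillations (tokens jump between the two sides of the bipartition without mixing). On a non-bipartite graph, odd cycles are the only mechanism that can resolve such parity oscillations, and the shortest such cycle has length $2\varphi(G)+1$; this creates a mixing bottleneck that prevents the discrepancy from dropping below a quantity proportional to $d \cdot \varphi(G)$.

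Concretely, I would fix a shortest odd cycle $C=(v_0,v_1,\ldots,v_{2\varphi})$ and run a BFS from $v_0$, assigning to each vertex a sign $\sigma(v)=(-1)^{\mathrm{dist}(v_0,v)}\in\{-1,+1\}$. A standard observation is that every edge of $G$ connects vertices whose BFS-distances from $v_0$ differ by at most $1$. Call an edge \emph{bichromatic} if its endpoints have different signs and \emph{monochromatic} otherwise. Any monochromatic edge $(u,w)$ would close an odd cycle through $v_0$ of length at most $\mathrm{dist}(v_0,u)+\mathrm{dist}(v_0,w)+1$, so by the definition of the odd girth both endpoints of every monochromatic edge lie at BFS-distance at least $\varphi$ from $v_0$.

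The analysis is driven by the signed potential $\Phi_t=\sum_{v\in V}\sigma(v)\,x_t(v)$. For a single token transferred along edge $(u,w)$, the contribution to $\Phi_{t+1}-\Phi_t$ is $(\sigma(w)-\sigma(u))\cdot f_t(u,w)$: bichromatic transfers simply flip the sign of the contribution, so they merely shuffle charge between the two color classes, whereas monochromatic transfers do not reduce $|\Phi_t|$ at all. I would then construct the initial load by placing $M=\Theta(d)$ tokens on each positively-signed vertex at BFS-distance less than $\varphi$ from $v_0$, and zero tokens elsewhere, with rotors initialized so as to maximize the asymmetry on $C$. Because all bichromatic edges inside the ball of radius $\varphi-1$ only flip charges between already-loaded and already-empty vertices of the ``locally bipartite'' induced subgraph, no mass can escape this ball toward a monochromatic edge in fewer than $\varphi$ steps. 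Round-fairness of the \RR bounds the flow across each edge by $\lceil x/d\rceil$ per step, yielding at most $O(d)$ parity correction per monochromatic edge per step. Combining the two, the persistent signed imbalance propagates into a discrepancy of $\Omega(d\cdot\varphi(G))$ between the heavily loaded positively-signed vertices near $v_0$ and the empty negatively-signed vertices.

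The main obstacle will be formalizing the ``distance to a monochromatic edge'' argument in a dynamic setting: the rotor states evolve, and mass in principle diffuses outwards in every step, so one must carefully track cumulative flow $F_t(e)$ out of the radius-$\varphi$ ball and show that even in the long-run steady behaviour of the rotor router, the rate at which parity information from monochromatic edges reaches $v_0$ is at most $O(1)$ per time step. Here I would lean on the cumulative $1$-fairness of the \RR (Definition~\ref{fair}) to show that the flow through any cut concentrates tightly around its mean, so that the signed potential cannot be corrected faster than one layer at a time, preserving the $\Omega(d\cdot\varphi(G))$ gap indefinitely.
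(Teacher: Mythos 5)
Your intuition — BFS layering from a vertex on the shortest odd cycle, the observation that edges joining two vertices at equal distance can only occur at distance at least $\varphi(G)$ from the root, and parity oscillation as the obstruction — matches the paper's starting point exactly. But the argument you build on top of it has a genuine gap: a lower bound on the signed potential $\Phi_t=\sum_v\sigma(v)x_t(v)$ does not imply a lower bound on the discrepancy of order $d\cdot\varphi(G)$. The potential is a \emph{global} sum; it can remain large forever while every individual vertex stays within $O(d)$ of the average (the imbalance simply spreads over the $\Theta(n)$ vertices of one colour class). Moreover, your initial configuration places only $\Theta(d)$ tokens per vertex, so at time $0$ the discrepancy is $\Theta(d)$, and nothing in the sketch forces $\Omega(d\varphi(G))$ tokens to ever concentrate on a single vertex. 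The step ``the persistent signed imbalance propagates into a discrepancy of $\Omega(d\cdot\varphi(G))$'' is precisely the missing argument, and the dynamic bookkeeping you flag as ``the main obstacle'' (tracking cumulative flow out of the radius-$\varphi$ ball, bounding the parity correction per monochromatic edge when there may be $\Omega(nd)$ such edges carrying unknown load) is not carried out and does not obviously close this hole. You also never specify the rotor directions, which the theorem requires you to exhibit.

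The paper avoids all of this by constructing an explicit $2$-periodic \emph{steady state} rather than analysing the dynamics. With $b(v)$ the distance to a root $u$ on the shortest odd cycle, it assigns to each edge $(v_1,v_2)$ with $\min(b(v_1),b(v_2))<\varphi(G)$ the flow $L\pm(\varphi(G)-\min(b(v_1),b(v_2)))$, the sign determined by the parities of $b(v_1),b(v_2)$, and flow $L$ elsewhere, with $f_1(v_1,v_2)=f_0(v_2,v_1)$. One checks that incident outgoing flows differ by at most $1$ and alternate in sign across steps, so a rotor ordering realizing this execution exists; the configuration then repeats with period $2$, and $u$ oscillates between loads $(L+\varphi(G))d$ and $(L-\varphi(G))d$ around the average $Ld$. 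The crucial quantitative ingredient your sketch lacks is this \emph{graded} profile, where the deviation of the flow from its mean grows by one unit per BFS layer as one approaches $u$, so that the $d$ incoming edges of $u$ each contribute a deviation of $\varphi(G)$. A flat $\Theta(d)$-per-vertex initial load cannot produce this; to make your approach work you would essentially have to rediscover the layered configuration, at which point the potential $\Phi_t$ becomes unnecessary.
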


\section{Conclusion}
We introduced two classes of deterministic load-balancing algorithms: Cumulative $\delta$-fair balancers and \class. The lower bounds show discrepancies of $\Omega(d \cdot \diam)$ for algorithms which are not cumulatively $\delta$-fair or which do not have any self-loops.
However, there are two main questions which we leave unanswered: 1) How many self-loops are necessary to obtain our bounds? 2) Are the restrictions imposed by \classs necessary to obtain our bounds?

\newpage
{
\small

}

\newpage
\pagenumbering{roman}
\section*{APPENDIX} 
\appendix
\section{Omitted Proofs from Section \ref{s:fair}}\label{sec:missingfair}

We start with a technical lemma giving bounds on the behaviour of the error matrix ${\bf \Lambda} _\tau$ for $\tau \in\natN$ with $\tau > \log n / \mu$. We recall that
$  {\bf \Lambda}_t=\randomwalk^t - \randomwalk^\infty $.

\begin{lemma}\label{lem:boundmix}
Let $ ({\mathbf q}_t)$ be a sequence of vectors parameterized by $t$, let $\bar{\mathbf  q}_t=\randomwalk^\infty {\mathbf  q}_t$, and let $c\in\natN $ be an arbitrary constant.
\begin{enumerate}[label=(\roman*)]
\item  
 For $t \geq c\cdot  4 \frac{\log(n\cdot \max_\tau  \infnorm{{\mathbf q}_\tau-\bar{\mathbf  q}_\tau}) }{\mu}$ we have
\begin{equation*}
\infnorm{{\bf \Lambda}_t {\mathbf q}_t}\leq 
2^{-c}.
\end{equation*}
\item
The following bound holds:
\begin{equation*}
\sum_{t \geq \frac{6c\log n}{\mu}}\infnorm{{\bf \Lambda} _t {\mathbf q}_t} \leq  n^{-c} \cdot \max_{\tau\geq  \frac{6c\log n}{\mu}}\{\infnorm{{\mathbf q}_\tau} \}
\end{equation*}

\end{enumerate}
\end{lemma}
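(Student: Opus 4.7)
The plan is to exploit the spectral properties of $\randomwalk$ on the $d^+$-regular balancing graph $G^+$, together with the zero-row-sum structure of ${\bf \Lambda}_t = \randomwalk^t - \randomwalk^\infty$. The first step is to observe that since $G^+$ is regular with $d^\circ \ge d$ self-loops per vertex, the chain is reversible with uniform stationary distribution $\pi\equiv 1/n$, and every eigenvalue of $\randomwalk$ lies in $[0,1]$; in particular $\lambda_2 = 1-\mu$ dominates the spectrum in absolute value outside the trivial eigenvalue~$1$. From the spectral decomposition of $\randomwalk^t$ in an orthonormal eigenbasis of the $\pi$-weighted inner product one obtains, for every $u \in V$, the Parseval-type estimate
\begin{equation*}
\sum_{v \in V} {\bf \Lambda}_t(u,v)^2 \;\le\; \lambda_2^{2t} \;\le\; e^{-2\mu t}.
\end{equation*}

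The next step is to exploit that every row of ${\bf \Lambda}_t$ sums to zero and that $\bar{\mathbf q}_t$ is a constant vector, which lets one rewrite $({\bf \Lambda}_t {\mathbf q}_t)(u) = \sum_{v} {\bf \Lambda}_t(u,v)(q_t(v) - \bar q_t(v))$. Applying the Cauchy--Schwarz inequality, combined with the spectral estimate above, then yields the uniform bound
\begin{equation*}
\infnorm{{\bf \Lambda}_t {\mathbf q}_t} \;\le\; \sqrt{n}\; e^{-\mu t}\; \infnorm{{\mathbf q}_t - \bar{\mathbf q}_t},
\end{equation*}
which will be the common workhorse for both parts of the lemma.

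For claim (i) I would substitute the assumed lower bound on $t$ into the workhorse inequality: writing $M = \max_\tau \infnorm{{\mathbf q}_\tau - \bar{\mathbf q}_\tau}$ and taking $t \ge 4c\log(nM)/\mu$, the right-hand side is at most $\sqrt{n}\,(nM)^{-4c}\,M = (nM)^{1-4c}/\sqrt{n}$, which is dominated by $2^{1-4c} \le 2^{-c}$ for $c\ge 1$ whenever $nM \ge 2$; the degenerate regime $nM < 2$ is handled separately by the crude bound $\infnorm{{\bf \Lambda}_t {\mathbf q}_t} \le 2M < 4/n$, which is already negligible. For claim (ii) I would use $\infnorm{{\mathbf q}_\tau - \bar{\mathbf q}_\tau} \le 2\infnorm{{\mathbf q}_\tau}$ and telescope the geometric series,
\begin{equation*}
\sum_{t \ge T_0} \infnorm{{\bf \Lambda}_t {\mathbf q}_t} \;\le\; 2\sqrt{n}\;\max_{\tau \ge T_0}\infnorm{{\mathbf q}_\tau}\; \sum_{t \ge T_0} e^{-\mu t} \;\le\; \frac{4\sqrt{n}\; e^{-\mu T_0}}{\mu}\;\max_{\tau \ge T_0}\infnorm{{\mathbf q}_\tau},
\end{equation*}
and then plug in $T_0 = 6c\log n /\mu$, so that $e^{-\mu T_0} = n^{-6c}$. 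The resulting bound has the form $O(n^{1/2-6c}/\mu) \cdot \max_\tau \infnorm{{\mathbf q}_\tau}$, which is dominated by $n^{-c}\max_\tau \infnorm{{\mathbf q}_\tau}$ as soon as $\mu \ge n^{-O(c)}$, a condition that is automatic for any connected regular graph.

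The main obstacle I anticipate is verifying the spectral domination $|\lambda_n| \le \lambda_2$, which is what makes $\lambda_2^t = (1-\mu)^t$ genuinely govern the rate of convergence; this is where the standing hypothesis $d^\circ \ge d$ is essential, since writing $\randomwalk = \tfrac{1}{d^+}(A + d^\circ I)$ with $A$ the adjacency matrix of $G$ and using $\mathrm{spec}(A)\subseteq[-d,d]$ forces every eigenvalue of $\randomwalk$ to lie in $[0,1]$, even on bipartite inputs. Everything else reduces to bookkeeping of constants, and the factors $4$ in (i) and $6$ in (ii) are deliberately loose, chosen to absorb the $\sqrt n$ and $1/\mu$ losses coming from Cauchy--Schwarz and the geometric sum respectively.
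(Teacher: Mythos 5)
Your proof is correct and follows essentially the same spectral route as the paper's: both exploit that ${\bf \Lambda}_t$ annihilates the constant vector $\bar{\mathbf q}_t$ and that all nontrivial eigenvalues have modulus at most $1-\mu$, then substitute the assumed lower bound on $t$ into an estimate of the form $\mathrm{poly}(n)\,(1-\mu)^t\,\infnorm{{\mathbf q}_t-\bar{\mathbf q}_t}$. Your Parseval/Cauchy--Schwarz step gives a sharper $\sqrt{n}$ prefactor and a clean geometric series for claim (ii), where the paper settles for $n^2$ via $\infnorm{X}\infnorm{X^{-1}}$ and a blocking argument, but this is a cosmetic difference rather than a different approach.
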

\begin{proof}
The proof proceeds by standard arguments (cf. e.g. \cite{LevinPeresWilmer2006}, Chapter 4); we do not attempt to optimize the constants in the claims. Consider the eigendecomposition of matrix $\bf P = X L X^{-1}$, where $L$ is the normalized diagonal matrix of eigenvalues $\text{diag}(1,\lambda_2,\ldots,\lambda_n)$, with $1 - \mu \geq |\lambda_2| \geq \ldots \geq |\lambda_n|$, for $2\leq i \leq n$. We have $\bf \Lambda_t = X L'_t X^{-1}$, where $L'_t = \text{diag}(0,\lambda_2^t,\ldots,\lambda_n^t)$. To show claim (i), we take into account that $\bar q_t$ is an eigenvalue of $\bf P$, so ${\bf \Lambda}_t \bar q_t = 0$, and we can write:
\begin{align*}
\infnorm{{\bf \Lambda}_t {\mathbf q}_t} &= \infnorm{{\bf \Lambda}_t ({\mathbf q}_t - \bf \bar q_t)} \leq \infnorm X \infnorm {L'_t} \infnorm {X^{-1}} \infnorm{{\mathbf q}_t - \bf \bar q_t} \leq n^2(1-\mu)^t \infnorm{{\mathbf q}_t - \bf \bar q_t} \leq\\ &< 2^{-\mu t} n^2 \infnorm{{\mathbf q}_t - \bf \bar q_t}.
\end{align*}
Claim (i) follows directly. To show Claim (ii), observe that we have:
\begin{align*}
\sum_{t = \frac{6c\log n}{\mu}}^{+\infty}\infnorm{{\bf \Lambda}_t {\mathbf q}_t} &\leq \sum_{i=0}^{+\infty}\sum_{t = \frac{6(c+i)\log n}{\mu}}^{\frac{6(c+i+1)\log n}{\mu} - 1} \infnorm{{\bf \Lambda}_t {\mathbf q}_t} \leq \sum_{i=0}^{+\infty}\left(\frac{6c\log n}{\mu} \cdot n^{-6(c+i)+2}\right) \max_{\tau\geq  \frac{6c\log n}{\mu}}\{\infnorm{{\mathbf q}_\tau} \}\leq\\
&\leq 2\cdot \frac{6c\log n}{\mu} \cdot n^{-6c+2} \max_{\tau\geq  \frac{6c\log n}{\mu}}\{\infnorm{{\mathbf q}_\tau} \} < n^{-c}\max_{\tau\geq  \frac{6c\log n}{\mu}}\{\infnorm{{\mathbf q}_\tau} \}.
\end{align*}
\end{proof}

The following proposition shows that
the change in the model (the way the tokens are retained) in the preliminaries of Section \ref{s:fair} in comparison to the original model described in Section \ref{sec:mod} does change the load
sent over any original edge in the graph. In particular,
 every cumulatively fair balancer can be transformed (by shifting tokens from self-loops to the remainder) into an algorithm guaranteeing cumulative fairness on all edges and that this transformed algorithm sends exactly the same load over original edges in every round.

\begin{proposition}\label{pro:SimulateCumFairness}
For any cumulatively $\delta$-fair load-balancing algorithm $A$,
there exists an Algorithm $A'$ with remainder $r \leq \dega$ such that for any $t$ and for all
$(u,v) \in {E}_u^+$
\begin{enumerate}
\item
$
 \left| \cumflow_t(u,v) - \frac{\out_t(u)}{\dega} \right|\leq \delta.$
\item the load sent over $(u,v)$ is the same in $A$ and $A'$.
\end{enumerate}
\end{proposition}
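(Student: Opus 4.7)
The plan is to construct $A'$ explicitly from $A$ by preserving all per-step outflows on the original (non-self-loop) edges while redistributing the self-loop flows to enforce cumulative fairness with respect to $\out_t(u)/\dega$, absorbing the leftover tokens into the remainder. Formally, at every node $u$ and step $t$, I set $f_t^{A'}(u, v_i) = f_t^A(u, v_i)$ for each original edge $(u,v_i) \in \original{E}_u$. For the self-loops, I pick a per-step allocation $f_t^{A'}(u,u_j)$ that keeps each cumulative $\cumflow_t^{A'}(u,u_j)$ as close as possible (up to rounding) to the target value $\out_t^{A'}(u)/\dega$; the remaining tokens become the remainder $r_t(u) = s_t^A(u) - \sum_j f_t^{A'}(u,u_j)$, where $s_t^A(u)$ denotes the total self-loop flow of $A$ at step $t$.

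The first verification step is that the redistribution is invisible globally. Because original flows match step by step, the flow that enters $u$ from its original neighbors is identical in $A$ and $A'$. The flow that $A$ would send around $u$'s self-loops is, in $A'$, split between (i) flow that still travels around self-loops and (ii) tokens stored as remainder; both contribute to $x_{t+1}^{A'}(u)$ through the relation $x_{t+1}^{A'}(u) = r_t(u) + f_t^{in,A'}(u)$. A direct induction using $r_t(u) + s_t^{A'}(u) = s_t^A(u)$ then yields $x_t^{A'}(u) = x_t^A(u)$ for every $t$ and $u$, which immediately gives condition (2) for the original edges (and hence preserves the externally observable behaviour of the algorithm).

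Condition (1) splits into the original-edge case and the self-loop case. For original edges, cumulative $\delta$-fairness of $A$ says that any two $\cumflow_t^{A'}(u,v_i), \cumflow_t^{A'}(u,v_j)$ differ by at most $\delta$. Since $\out_t(u)/\dega$ is an average of the per-edge flows, any particular flow deviates from this average by no more than the spread of the collection, which is bounded by $\delta$. For self-loops, the construction enforces the bound directly: the algorithm chooses $f_t^{A'}(u,u_j)$ precisely so that the cumulative per-self-loop flow tracks the moving target, and the discretization error is absorbed into the remainder rather than into the cumulative flow.

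The main obstacle is the bound $r_t(u) \le \dega$. The cleanest route will be a potential-style argument: the quantity $\sum_j |\cumflow_t^{A'}(u,u_j) - \out_t^{A'}(u)/\dega|$ can only increase by $O(1)$ per step because each step's fair-share increment is itself within $O(1)$ of a per-loop integer share, and the per-step lower bound $\lfloor x_t(u)/\dega \rfloor$ guaranteed by cumulative $\delta$-fairness of $A$ supplies enough tokens on self-loops and originals alike to sustain these integer increments. Combining this with the fact that, once the self-loops catch up to within $\delta$ of the target, any surplus must be emitted rather than retained, I expect to close the induction with $r_t(u) < \dega$ at every step. The arithmetic accounting for the interplay between $s_t^A(u)$, the target increments, and the per-edge lower bound $\lfloor x_t(u)/\dega \rfloor$ is the technically delicate part, and will hinge on the assumption $\degs \ge d$ (equivalently $\dega \ge 2d$) that is standard in the surrounding results.
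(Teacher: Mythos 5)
Your high-level plan (keep the original-edge flows of $A$ unchanged and shift tokens between self-loops and the remainder to restore fairness) is the same as the paper's, but two steps of your argument do not go through as written. First, your justification of condition (1) for original edges is incorrect: you argue that each $\cumflow_t(u,v)$ is within the spread $\delta$ of the average $\out_t(u)/\dega$, but that average is taken over \emph{all} $\dega$ edges including the self-loops, and $A$'s cumulative $\delta$-fairness constrains only the pairwise differences among \emph{original} edges. If the self-loop cumulative flows sit far from the original-edge cluster, the average is dragged away and an original edge can deviate from it by much more than $\delta$ (a short computation along your lines only yields something like $\delta + \degs/\dego$ plus the self-loop tracking error). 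The paper avoids this by maintaining the stronger invariant that in $A'$ \emph{every pair} of edges of $u$, self-loops included, satisfies $|\cumflow'_t(e_1)-\cumflow'_t(e_2)|\le\delta$; since all $\dega$ values then lie in an interval of length $\delta$, each is automatically within $\delta$ of their mean $\out_t(u)/\dega$. Your construction would need to establish this pairwise statement (between self-loops and originals, not just self-loops versus the target), and as stated it also has a circularity: the target $\out_t^{A'}(u)/\dega$ that the self-loops are supposed to track itself depends on the self-loop flows being chosen.

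Second, the bound $r\le\dega$, which you correctly identify as the crux, is left as an expectation rather than proved, and the potential-function route you sketch (together with the unnecessary hypothesis $\degs\ge d$) is not how it is obtained. The paper's argument is much more elementary: it processes the edges of $u$ at step $t$ one by one, and whenever a self-loop $e_1$ would break pairwise $\delta$-fairness it shows (using the inductive hypothesis $|\cumflow'_{t-1}(e_1)-\cumflow'_{t-1}(e_2)|\le\delta$ and $|f_t(e_1)-f_t(e_2)|\le 1$) that the violation is by exactly one token and in a consistent direction, so that changing $f'_t(e_1)$ by exactly $\pm1$ and moving that single token into or out of the remainder repairs it; only self-loops ever need adjusting because original edges already satisfy fairness under $A$. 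Since each of the at most $\degs\le\dega$ self-loops is adjusted by at most one token per step and $r_t(u)=\sum_{e}(f_t(e)-f'_t(e))$ is a per-step quantity, the bound $|r_t(u)|\le\dega$ is immediate, with no potential argument and no assumption on $\degs$ beyond $\degs\ge 1$. I would recommend replacing your self-loop ``track the target'' rule by this one-token repair rule and proving the pairwise fairness invariant by induction on $t$; that simultaneously fixes the original-edge bound and delivers the remainder bound.
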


\begin{proof}
The reformulation of algorithm $A$ as algorithm $A'$ proceeds as follows.
 For all edges $e \in E(G)$ (\ie, except for self-loops), in every step $A'$ places the same amount of load on $e$ as $A$. However, in $A'$ load may be retained on nodes in a different way, being placed in the remainder $r_t(u)$ rather than on self-loops at node $u\in V$. To prove, that it is always possible, we proceed by induction.

Specifically, at a fixed moment of time $t$, let $f_t(e)$ be the amount of load put on an edge $e$ by algorithm $A$, and $f'_t(e)$ be the amount of load put on an edge by $A'$, and let $F_t(e)$ and $F'_t(e)$ be the respective cumulative loads for algorithms $A$ and $A'$. Algorithm $A'$ processes all edges (including self-loops) sequentially and verifies if sending this amount of load along $e$ would satisfy the cumulative fairness condition up to time $t$ with respect to all edges original from $u$ already processed.

Let $e_1$ be the edge or self-loop that violates the cumulative load property for $A'$, that is there exists an incident edge or self-loop $e_2$ such that $|(F'_{t-1}(e_1)+f_t(e_1)) - (F'_{t-1}(e_2)+f_t(e_2))| > \delta$. Since $|f_t(e_1) - f_t(e_2)| \le 1$, and $|F'_{t-1}(e_1)-F'_{t-1}(e_2)| \le \delta$ (from inductive assumption), we get that
\begin{equation}\label{eq:cumf} (F'_{t-1}(e_1)+f_t(e_1)) - (F'_{t-1}(e_2)+f_t(e_2)) \in \{\delta+1,-\delta-1\}\end{equation}
(without loss of generality we can assume that this value is $\delta+1$). Moreover, we can show that for every $e'_2$ such that the pair $e_1,e'_2$ violates cumulative fairness, the value \eqref{eq:cumf} is $\delta+1$ (otherwise $F'_{t-1}(e_1)-F'_{t-1}(e_2) = \delta$ and $F'_{t-1}(e_1)-F'_{t-1}(e'_2) = -\delta$ imply $F'_{t-1}(e'_2)-F'_{t-1}(e_2) = 2\delta$ which contradicts the inductive assumption).
We can also observe that $e_1$ is a loop (attached to vertex $u$), since non-loop edges satisfy cumulatively fairness for $A$.
Thus it is enough to set $f'_t(e_1) = f_t(e_1)-1$ and increase $r_t(u)$ by one (in the mirror scenario with a value of $-\delta-1$ in \eqref{eq:cumf} we would set $f'_t(e_1) = f_t(e_1)+1$ and decrease $r_t(u)$ by one). It is easy to observe that this makes $e_1$ satisfy $\delta$-fairness with every other edge incident to $u$. After processing all edges and self-loops and edges in this way and since every edge receives $\floor*{x_t(u)/\dega}$ tokens, we eventually obtain that cumulative fairness is preserved, and moreover $|r'_t(u)| \leq \dega$.
\end{proof}

\subsection{Proof of Theorem \ref{the:prediscrepancyinreggraphs}}\label{sec:appendix23}
Here we finish the proof of Theorem \ref{the:prediscrepancyinreggraphs} presented in Section \ref{s:fair}.

\begin{proof}
We begin by providing the details of bound \eqref{eqn:boundtequals1}. Let $w$ be the $i$'th node of $V$, then we define $\bf w$ to be the vector, such that ${\bf w}[i]=1$ and $\forall_{j \not= i} {\bf w}[j]=0$. Let $a = t- \tau$. Then, we have:
\begin{align}
 \infnorm{(\randomwalk^{t+1-\tau} - \randomwalk^{t-\tau}) \eps_\tau }
&=\infnorm{(\randomwalk^{a+1} - \randomwalk^{a})\eps_{t-a} }
=\max_{w\in V}\left|  {\mathbf w}^{\top} \left(\randomwalk^{a+1} - \randomwalk^{a}\right)\eps_{t-a}  \right|\notag\\
&=  \max_{w\in V} \left| {\mathbf w}^{\top} \left( \randomwalk^{a+1} - \randomwalk^{a}\right) \left(\sum_{v\in V}\eps_{t-a}(v)\mathbf{ v }\right) \right| \notag\\
&\leq  \infnorm{\eps_{t-a}} \cdot \max_{w\in V} \sum_{v\in V} \left|   {\mathbf w} ^{\top} ( \randomwalk^{a+1} - \randomwalk^{a}) \mathbf{ v } \right|\notag \\
&\leq (\bounderrorproof)  \cdot \max_{{ w}\in V} \sum_{v\in V} \left| P_{a+1}(v,w)-P_a(v,w) \right|  & \eqref{eqn:boundtequals1}\notag
\end{align}
Since the graph is regular, we have $P_{t}(v,w)=P_{t}(w,v)$.\footnote{For general graphs, one can use $P_{t}(v,w)=(\dega(w)/\dega(v)) P_{t}(w,v)$}
From here on, we split the analysis of the claims of Theorem  \ref{the:prediscrepancyinreggraphs}, proving each one of them by
bounding (\ref{eqn:boundtequals1}) separately.

\begin{enumerate}[label=(\roman*)]
\item {\bf $\bigo\left((\delta+1)d\sqrt{\frac{\log n}{\mu}}\right)$-discrepancy for $\dega \geq 2d$ :}\\
For regular graphs having $P(u,u)\geq 1/2$ for all $u\in V$,  we have by 
\cite{LevinPeresWilmer2006} (for $a > 0$)
$$\forall_{w\in V} \sum_{v\in V}\left|P_{a+1}(w,v)-P_{a}(w,v) \right| < \frac{24}{\sqrt a}. $$
(For case of $a=0$, we have $\forall_{w\in V} \sum_{v\in V}\left|P_{1}(w,v)-P_{0}(w,v) \right| \le 2.$)

We obtain by applying $\sum_{a=1}^{t-t^*}1/\sqrt{a} \leq 2\sqrt{t-t^*} $
$$\sum_{0 \le a< t-t^*}\infnorm{(\randomwalk^{a+1} - \randomwalk^{a})\eps_{t-a} }
< (\bounderrorproof) \left(2+48\sqrt{ \underbrace{t-t^*}_{4\tmi}}\right)
 \le 98(\bounderrorproof)\sqrt{\tmi}.$$
Introducing this into (\ref{eqn:theT}) and setting $\hatt=1$ yields
$$\infnorm{\loadv_{t+1} - \bar\load}=
\bigo\left((\delta\dega+r)\sqrt{\tmi} \right) = \bigo \left((\bounderror) \sqrt{\frac{\log n}{\mu}} \right)$$ where we take into account that $\tmi = 6\log n/\mu$, and that $r\leq \dega$.
(Observe that whenever a cumulatively fair balancer has a $\delta=0$ the bound on the remainder $r$ has to be of order $\Omega(d)$.)
\item {\bf $\bigo\left((\delta+1)d\sqrt{n}\right)$-discrepancy for $\dega \geq 2d$ :}\\
Let $D_{a+1}$ be the diagonal matrix of $(P_{a+1}-P_{a})$ and let $X$ be the corresponding base change matrix.
\begin{align}
\max_{w \in V} \sum_{v\in V}\left|P_{a+1}(w,v)-P_{a}(w,v) \right|
&= \max_{w \in V} \sum_{v\in V}|{\bf v} (P_{a+1}-P_{a}) {\bf w} | \notag\\
&=\max_{w \in V}  \onenorm{(P_{a+1}-P_{a}) {\bf w}}  \notag\\
&\leq \max_{\twonorm{\bf w}=1}\onenorm{(P_{a+1}-P_{a}) {\bf w}}\notag\\
&= \max_{\twonorm{\bf w}=1}\onenorm{X^{\top} D_{a+1} X {\bf w}}\notag\\
&\leq \sqrt{n}\twonorm{X^{\top}D_{a+1} X }\notag\\
&= \sqrt{n}\twonorm{D_{a+1} X } \label{eqn:boundforsqrtn}.
\end{align}

We have
$$D_{a+1}=\left(
\begin{matrix}
 \lambda_1^{a+1}-\lambda_1^{a} &  0  & \ldots & 0\\
0  &  \lambda_2^{a+1}-\lambda_2^{a} & \ldots & 0\\
\vdots & \vdots & \ddots & \vdots\\
0  &   0       &\ldots &  \lambda_n^{a+1}-\lambda_n^{a}
\end{matrix}
\right).
$$
where $\lambda_1 \kdots \lambda_n$  are the eigenvalues of $P$. We note that $\lambda_1=1$ and $\lambda_2 \kdots \lambda_n\in[0,1]$ since $\degs \geq \dego$.
Hence by plugging (\ref{eqn:boundforsqrtn}) in  (\ref{eqn:boundtequals1}) we derive
\begin{align*}
\sum_{0 \le a< t-t^*}\infnorm{(\randomwalk^{t+\hatt-\tau} - \randomwalk^{t-\tau}) \eps_\tau }
&\leq (\delta+1)\dega \cdot \sum_{a< t-t^*}\max_{ w\in V} \sum_{v\in V}|P_{a+1}(w,v)-P_{a}(w,v) | \\
&\leq (\delta+1)\dega \cdot \sum_{a< t-t^*}\sqrt{n} \twonorm{D_{a+1} X}\\
&\leq (\delta+1)\dega\cdot\sqrt{n}\sum_{a=0}^{t-t^*} |\lambda_2^{a+1}-\lambda_2^{a}  | \\
&=(\delta+1)\dega\cdot\sqrt{n}\cdot(\lambda_2^{0}-\lambda_2^{t-t^*})\\
&\leq(\delta+1)\dega\cdot\sqrt{n}
\end{align*}
Introducing this into  (\ref{eqn:theT}) and setting $\hatt=1$ yields:
$$\infnorm{\loadv_{t+1} - \bar\load}
 \leq (\delta+1)\dega  \cdot \sqrt{n},$$
which completes the proof.
\item {\bf $\bigo\left((\delta+1)\frac{d\log n}{\mu}\right)$-discrepancy for $\dega \geq d+1$:}\\
We bound the term $\sum_{t^* < \tau \leq t}\infnorm{(\randomwalk^{t+\hatt-\tau} - \randomwalk^{t-\tau})\eps_\tau } $ of  (\ref{eqn:theT}):\\

\begin{align*}
\sum_{t^* < \tau \leq t}\infnorm{(\randomwalk^{t+\hatt-\tau} - \randomwalk^{t-\tau})\eps_\tau }
&\leq \sum_{t^* < \tau \leq t}\left(\infnorm{(\randomwalk^{t+\hatt-\tau}\eps_\tau } + \infnorm{\randomwalk^{t-\tau}\eps_\tau }\right) \\
&\leq 2 \sum_{t^* < \tau \leq t}\infnorm{\eps_\tau }\\
&\leq 2(\bounderrorproof)(t-t^*) = 8\tmi(\bounderrorproof).
\end{align*}
Putting this into (\ref{eqn:theT}) and dividing by $\hatt$ gives:
\begin{equation}\label{eqn:lasteqnofkeylemma}
 \infnorm{ \frac{\sum_{t < \tau \leq t+\hatt }\loadv_\tau}{\hatt} -  \bar\load}  \leq \left(\delta\dega+2r+\frac 14\right)+\frac{(8\tmi+1)(\bounderrorproof)}{\hatt}.
\end{equation}

The claim follows from (\ref{eqn:lasteqnofkeylemma}) by setting $\hatt=1$.
\end{enumerate}
\end{proof}

\section{Omitted Proofs from Section \ref{sec:goods}}\label{sec:missingstrict}

\subsection{Proof of Lemma \ref{cor:dropbelowtheline}}\label{lemma5}
\begin{proof}
We build upon the proof of Theorem \ref{the:prediscrepancyinreggraphs} (see Section \ref{s:fair}).
We will use the notation established in the proof of  Theorem \ref{the:prediscrepancyinreggraphs}.
Since $\degs=O(\dega)$, $\delta=O(1)$, and $r\leq \dega$, there is a constant $c$ such that $c\dego \geq \delta\dega+r$.
We also have $\tmi = 6\log n / \mu$.
We set $\hatt \ge 216 c \cdot \frac{d\log n}{\mu (\lambda+1)}$. \\
We derive from (\ref{eqn:lasteqnofkeylemma})

$$ \infnorm{ \frac{\sum_{t < \tau \leq t+\hatt }\loadv_\tau}{\hatt} -  \bar\load}  \leq \delta\dega+2r+\frac 14+\frac{9\tmi \cdot cd}{216c \cdot \frac{d\log(n)}{\mu (\lambda+1)}}
\leq \delta\dega+2r+\frac 14+\frac{(\lambda+1)}{4}
\leq \delta\dega+2r+\frac 14 + \lambda + \frac{1}{4}.$$
Therefore, we have that the difference between the average load of any node over $\hatt = \frac{6 d\log(n)}{\mu (\lambda+1)}$ steps and  the average load $\bar\load$ is bounded by  $\delta\dega+2r+\frac 12+ \lambda$.
This means that for every node $u$ there has to be a time step $t'\in [t+1,t+\hatt]$  such that  $x_{t'}(u)-\bar\load \leq   \delta\dega+2r+\frac 12 + \lambda$.
This yields the claim.
\end{proof}

\subsection{Proof of Lemma~\ref{potential}}\label{lemma6}
\begin{proof}
Fix $c\in\natN$. At any time $t$, we will divide the set $L$ of $m$ tokens circulating in the system into two groups: the set of \emph{black} tokens $L_t^-$ and the set of \emph{red} tokens $L_t^+$, with  $L = L_t^- \cup L_t^+$. Colors of tokens persist over time unless they are explicitly recolored. For $t=1$, for each node $u$ we color exactly $|L_1^-(u)|= \min\{\load_{1}(u), c\dega\}$ tokens at $u$ black, and the remaining tokens at $u$ red. In every time step, we follow two rules concerning token distribution:
\renewcommand{\labelenumi}{(\arabic{enumi})}
\begin{enumerate}
\item \label{rule1}The number of black tokens leaving a node $u$ along any edge (including self-loops) is never more than $c$.
\item \label{rule2}At the start of each subsequent step $t$, we recolor some red tokens to black, so that the total number of black tokens located at a node $u$ is exactly $|L_t^-(u)|= \min\{\load_{t}(u), c\dega\}$.
\end{enumerate}
\renewcommand{\labelenumi}{\arabic{enumi}.}
We note that both rules of token circulation are well defined. The proof proceeds by induction. To prove the correctness of rule (1) at step $t$, observe that, by the definition of \classs, for any node $u$ we either have $\load_{t}(u) \leq c\dega$ and then node $u$ sends at most $c$ tokens along each of its edges and self-loops, or $\load_{t}(u) > c\dega$, and then node $u$ sends at least $c$ tokens along each of its edges and self-loops. In the first case, rule (\ref{rule1}) is correct regardless of how $u$ distributes tokens of different colors; in the second case, $u$ has exactly $c\dega$ black tokens, and we can require that it sends exactly $c$ of its black tokens along each of its edges and self-loops. To prove the correctness of rule (\ref{rule2}), we note that by the correctness of rule (\ref{rule1}) for the preceding time step, the number of black tokens arriving at $u$ along edges and self-loops can be upper-bounded by $\min\{\load_{t}(u), c\dega\}$. Hence, no recoloring of tokens from black to red is ever required.

We now observe that the potential $\phi_t(c)$ is by definition the number of red tokens circulating in the system at any given moment of time. Indeed, we have:
$$
\phi_t(c) = \sum_{u\in V} (\load_{t}(u) - \min\{\load_{t}(u), c\dega\}) = m - \sum_{u\in V} |L_t^-(u)| = m - |L_t^-| = |L_t^+|.
$$
The monotonicity 
 for the potential follows immediately from the fact that no new red tokens appear in the system. To prove 
the potential drop, we will show that the number of tokens being recolored from red to black at node $u$ in step $t$ is at least $\Delta_t(c,u)$. Indeed, suppose that at time $t-1$ we had for some node $u$: $\load_{t-1}(u) = c\dega + i$, for some $i\geq 1$. Then, by definition of the self-preference of algorithm $A$, at least $c+1$ units of load will be sent on at least $i' = \min\{i,s\}$ self-loops of $u$ in step $t$. By rule (\ref{rule1}) of the token circulation process, each of these self-loops will contain at least one red token. Thus, the number of red tokens arriving at $u$ at time $t$ is at least $i'$. On the other hand, the number of red tokens remaining after the recoloring at $u$ in step $t$ is precisely $\max\{\load_{t}(u)-c\dega,0\}$. Thus, the number of tokens recolored from red to black at $u$, or equivalently the potential drop induced at $u$,
 is at least
$$
\max\{ i'-\max\{\load_{t}(u)-c\dega,0\},0\}  = \max\{\min\{\load_{t-1}(u)-c\dega, s\}-\max\{\load_{t}(u)-c\dega,0\},0\} \overset{\text{def}}= \Delta_t(c,u)
$$
which yields the claimed potential drop.
\end{proof}

\subsection{Proof of Lemma~\ref{potentialprime}}
\begin{proof}
The proof is similar to Lemma \ref{potential}.
Fix $c\in\natN$. At any time $t$, we will divide the set $L$ of tokens circulating in the system into two groups: the set of \emph{black} tokens $L_t^-$ and the set of \emph{red} tokens $L_t^+$, with  $L = L_t^- \cup L_t^+$. Colors of tokens persist over time unless they are explicitly recolored. For $t=1$, for each node $u$ we color exactly $|L_1^-(u)|= \min\{\load_{1}(u), c\dega+s\}$ tokens at $u$ black, and the remaining tokens at $u$ red. In every time step, we follow two rules concerning token distribution:
\renewcommand{\labelenumi}{(\arabic{enumi})}
\begin{enumerate}
\item\label{app_enum1} The number of black tokens leaving $u$ along original edges is at most $c$.
\item \label{app_enum2} At the start of each subsequent step $t$, we recolor some red tokens to black, so that the total number of black tokens located at a node $u$ is exactly $|L_t^-(u)|= \min\{\load_{t}(u), c\dega +s\}$.
\end{enumerate}
\renewcommand{\labelenumi}{\arabic{enumi}.}
We note that both rules of token circulation are well defined. The proof proceeds by induction. To prove the correctness of rule (\ref{app_enum1}) at step $t$, observe that, by the definition of \classs, for any node $u$ we either have $\load_{t}(u) \leq c\dega$ and then node $u$ sends at most $c$ black tokens along each of its edges and self-loops, or $\load_{t}(u)  > c\dega$, and then node $u$ sends over $\max\{\min \{x_t(u) - c\dega, s  \},0\}$ many self-loops $c+1$ black tokens and $c$ along all other edges. In the first case, rule (\ref{app_enum1}) is correct regardless of how $u$ distributes tokens of different colors; in the second case, let $s'=\max\{\min \{x_t(u) - c\dega, s  \},0\}$.
Node $u$ has exactly $ c\dega+s'$  black tokens, and we can require that it sends exactly $c+1$ of its black tokens along $s'$ many arbitrary self-loops, since the algorithm is $s$-self-preferring, and exactly $c$ along each of its other edges. To prove the correctness of rule (\ref{app_enum2}), we note that by the correctness of rule (\ref{app_enum1}) for the preceding time step, the number of black tokens arriving at $u$ along edges and self-loops can be upper-bounded by $\min\{\load_{t}(u), c\dega +s\}$. Hence, no recoloring of tokens from black to red is ever required.

We now observe that the potential $\phi'_t(c)$ is by definition the number of missing black tokens such that every node has $c\dega+s$ of them. Indeed, we have:
$$
\phi'_t(c) = \sum_{u\in V} (c\dega\ +s - \min\{\load_{t}(u), c\dega + s\}) = (c\dega + s)\cdot n - \sum_{u\in V} |L_t^-(u)| = (c\dega + s)\cdot n - |L_t^-|.
$$
The monotonicity 
for the potential follows immediately from the fact that no new red tokens appear in the system. To prove 
the claimed potential drop, we will show that the number of tokens being recolored from red to black in time step $t$ is at least $\Delta'_t(c,u)$. Note, that a red token, which is recolored in black, will decrease the potential by 1.\\
Indeed, suppose that at time $t-1$ we had for a node $u$: $\load_{t-1}(u) = c\dega +s - i$, for an integer $i\geq 1$.
Then, by the definition of the self-preference of algorithm $A$, at
least $i'=\min\{i,s \}$ self-loops carry at most $c$ tokens in step $t$. Intuitively, each of them can 'trap' a red token.
 By rule (\ref{app_enum1}) of the token
circulation process, every neighbor of $u$ sent at most $c$ black tokens.
Thus, the number of black tokens arriving at $u$ at time $t$ is at most $c\dega + s - i'$, and the number of red tokens which $u$ receives is at least $\max\{x_t(u) - (c\dega + s-i'),0\}$.
Therefore, for $\load_{t-1}(u) < c\dega + s$, since at most $i'$ self-loops 'trap' a red token we have that the number of red tokens which are repainted black at node $u$ at time $t$ is at least (by rule (\ref{app_enum2}) of recoloring)
\begin{align*}
&\min\{ \max\{x_t(u) - (c\dega + s-i'),0\},i'\}
\\&=
\min  \{ \max\{\
\load_{t}(u)-(c\dega -\min\{c\dega -\load_{t-1}(u),0 \}),0\},\min\{  c\dega +s -\load_{t-1}(u)  ,s \}  \}
\\&=  \max\{\min\{x_t(u) - x_{t-1}(u),s,x_t(u) -c\dega,c\dega +s -\load_{t-1}(u)\},0\} = \Delta'_t(c,u)
,\end{align*}
and for $\load_{t-1}(u)\geq c\dega+s$ we have $\Delta'_t(c,u)= 0$,
which yields the claimed potential drop.
\end{proof}

\subsection{Proof of Theorem \ref{the:timecprime}}

\begin{proof}
We consider the behavior of the process for $t\ge T=\bigo(\frac{\log(Kn)}{\mu})$. Due to the monotonicity described in Lemma \ref{potential},
once the maximum load in the system is below $c\dega$ for some c, it will stay below this threshold. In particular, this means that the maximum load will not exceed $x_m = \bar x + \bigo\left(\frac{d\log n}{\mu}\right)$ after $\bigo\left((\delta+1) \dego \cdot{\sfrac{\log n}{\mu}} \right)$ time, as shown
in Theorem~\ref{the:prediscrepancyinreggraphs}\ref{en1:part3}.

In the first part of the proof, we will show that after a further $\bigo(\frac{\dega}{s}\frac{\log^2 n }{\mu })$ time steps, the maximum load in the network will drop below the threshold value $c_0 d^+$, where $c_0$ is the smallest integer such that $c_0 d^+ \geq \bar\load +\delta d^+ + 2\degs + \dega/2.$

We note that if there exists a node $u$ with load $x_t(u) \geq c_0\dega+1$ at some time moment $t\geq T$, then by Lemma~\ref{cor:dropbelowtheline} with $\lambda = d^+/2-1/2$ (which we can apply to \classs, putting $r=\degs$, by Proposition~\ref{pro:SimulateCumFairness}), there exists some time step $t' \in [t+1; t + \hatt]$, where $\hatt = \bigo\left( \frac{\log n }{\mu }\right)$, such that we have $x_{t'}(u) \leq c_0\dega$. We then obtain directly from Observation~\ref{obs:fastdrop} that such a load change for node $u$ results in the decrease of potential $\phi(c_0)$ in the time interval $[t+1; t + \hatt]$. Since the potential $\phi(c_0)$ is non-increasing and non-negative, it follows that eventually the load of all nodes must be below the threshold $c_0\dega$.

In order to prove that the load of all nodes drop below $c_0\dega$ within $\bigo(\frac{\dega}{s}\frac{\log^2 n }{\mu })$ time steps after time $T$, we apply a more involved potential-decrease argument based on the parallel drop of multiple potentials $\phi(c)$, for $c\in\{c_0,c_0+1, \ldots, c_1\}$. Here, $c_1$ is the smallest integer such that $c_1 d^+ \geq x_m = \bar\load + \bigo\left(\frac{d\log n}{\mu}\right)$.

We now partition the execution of our process into phases of duration $t_p$, $p = 1,2,3,\ldots, p_f$, such that, at the end of moment $T_p = T + t_1 + \ldots + t_p$ (end of the $p$'th phase), the following condition is satisfied:
\begin{equation}\label{eqn:a1}
\phi_{T_p}(c) \leq 4^{(c_1-c)} 2^{-p} (c_1-c_0)d^+ n, \quad\text{for all $c_0 \le c \le c_1$.}
\end{equation}
As we have remarked, the values of time $t_p$ are well defined, since eventually the potentials $\phi(c)$ drop to $0$, for all $c_0 \leq c \leq c_1$. Our goal is now to bound the ending time $T_{p_f}$ of phase $p_f$, where:
\begin{equation}\label{eqn:a2a}
p_f = 2(c_1-c_0) + \lceil \log ((c_1-c_0)d^+ n)\rceil + 1.
\end{equation}

At the end of this phase, we will have by~\eqref{eqn:a1} that $\phi_{T_{p_f}}(c_0) \leq 1/2$, hence $\phi_{T_{p_f}}(c_0) = 0$, and so there are no nodes having load exceeding $c_0 d^+$.

We now proceed to show the following bounds on the duration of each phase $t_p$:
\begin{equation}\label{eqn:a2}
t_p = \bigo\left(\frac{d^+}{s}\frac{d\log n }{\mu\cdot\max\{(2(c_1 - c_0) -p)\dega+1,\dega/2+1\}}\right).
\end{equation}
For any fixed $p$, assume that bound~\eqref{eqn:a2} holds for all phases before $p$. For phase $p$, the proof proceeds by induction with respect to $c$, in decreasing order of values: $c=c_1, c_1-1 \kdots c_0$.

First, we consider values of $c \geq c_1 - p/2$. For a fixed $c$, following \eqref{eqn:a1} we denote $b = 4^{(c_1-c)} 2^{-p} (c_1-c_0)d^+ n$. Knowing that $\phi_{T_{p-1}}(c) \leq 2b$, $\phi_{T_{p-1}}(c+1) \leq b/2$, and by the inductive assumption $\phi_{T_{p}}(c+1) \leq b/4$, we will show that $\phi_{T_{p}}(c) \leq b$.
Let $s_t(c) := \phi_t(c) - \phi_t(c+1)$; intuitively, $s_t(c)$ can be seen as total the number of tokens in the system which are ``stacked'' on their respective nodes at heights between $c d^++1$ and $(c+1) d^+$. For  $t\in[T_{p-1};T_p]$, $s_t(c)$ satisfies the following bound:

\begin{equation}\label{eqn:a3}
s_t(c) = \phi_t(c) - \phi_t(c+1) \geq \phi_t(c) - \phi_{T_{p-1}}(c+1) \geq \phi_t(c) - \tfrac{b}{2}.
\end{equation}
For any such time moment $t$ consider the set of nodes with load at least $c\dega$ at time $t$.
Within the time interval $[t+1; t+\hatt]$, where the period of time $\hatt = \bigo\left(\frac{d\log n }{\mu\max\{(2(c - c_0)\dega + 1),\dega/2+1\}}\right)$ follows from Lemma~\ref{cor:dropbelowtheline},
every node with a load of more than $c\dega$ at time $t$, will decrease its load below $\bar\load +  \delta\dega+2\degs+\frac 12 +\frac{1}{2}\max\{(2(c - c_0)\dega),\dega/2\} \le c_0 d^+ - \dega/2  + \frac12 + \max\{(\dega(c - c_0)),\dega/4\}  \le c\dega + \frac12$ (and so also below $cd^+$) at some moment of time during the considered time interval.
By Observation \ref{obs:fastdrop} a potential drop occurs for $\phi(c)$ in the considered interval $[t+1; t+\hatt]$. More precisely, every node $u$ with $x_t(u)\in [c\dega, c\dega +s]$ contributes $x_t(u) - c\dega$ to both the potential drop and the value of $s_t(c)$,
whereas every node $u$ with  $x_t(u) >  c\dega +s$  contributes exactly $s$ to the potential drop and at most $d^+$ to the value of $s_t(c)$.
Hence, we obtain from Observation \ref{obs:fastdrop} (and the fact that $s \le d^+$):
\begin{equation}\label{eqn:a4}
\phi_{t+\hatt}(c) \leq \phi_t(c) - \frac{s}{d^+}\cdot s_t(c).
\end{equation}
Combining \eqref{eqn:a3} and \eqref{eqn:a4}, we obtain for any time moment $t\in[T_{p-1},T_p]$:
\begin{equation}\label{eqn:a5}
\phi_{t+\hatt}(c) \leq \phi_t(c) - \frac{s}{d^+}\cdot (\phi_t(c) - \tfrac{b}{2}).
\end{equation}
We can transform this expression to the following form:
$$
\phi_{t+\hatt}(c)-\tfrac{b}{2} \leq \phi_t(c)-\tfrac{b}{2} - \frac{s}{d^+}\cdot (\phi_t(c) - \tfrac{b}{2}) = \left(1-\frac{s}{d^+}\right)\cdot (\phi_t(c) - \tfrac{b}{2}).
$$
Observe that we can fix $t_p$ satisfying \eqref{eqn:a2} so that $t_p \geq  {\frac{d^+}{s}}\hatt$ (which implies $T_p \ge T_{p-1} + {\frac{d^+}{s}}\hatt$) where we took into account that $2(c - c_0)  \geq 2(c_1 - c_0) -p$ for $c \geq c_1 - p/2$. Now, taking advantage of the monotonicity of potentials, we have from \eqref{eqn:a5}:
$$
\phi_{T_{p}} \leq \phi_{(T_{p-1} + {\frac{d^+}{s}}\hatt)}(c) \leq \tfrac{b}{2} + \left(1-\frac{s}{d^+}\right)^{\frac{d^+}{s}}\cdot (\phi_{T_{p-1}}(c) - \tfrac{b}{2}) \leq \tfrac{b}{2} + \tfrac{1}{2} (2b - \tfrac{b}{2}) = \tfrac34b \le b,
$$
which completes the inductive proof of the bound on $t_p$ for $c \geq c_1 - p/2$.

Moreover, for $c < c_1 - p/2$,  \eqref{eqn:a1} holds because $4^{(c_1-c)} 2^{-p} > 1$, and $\phi_{T_p}(c) \leq(c_1-c_0)d^+ n$ holds by the definition of potentials.

Now, taking into account \eqref{eqn:a2a} and \eqref{eqn:a2}, we can bound the time of termination of phase $p_f$ of the process as follows:

\begin{align*}
T_{p_f} &= T + \sum_{p=1}^{p_f} t_{p_f} \notag\\
& =  T + \sum_{p=1}^{p_f} \bigo\left(\frac{d^+}{s}\frac{\dego\log n }{\mu\cdot\max\{(2(c_1 - c_0) -p)\dega+1,\dega/2+1\}}\right)\notag\\
&= \bigo\left(T + \frac{d}{s}\frac{\log n}{\mu}\left(\left(\sum_{p=1}^{2(c_1-c_0)-1} \frac{1}{2(c_1 - c_0) - p+1/\dega}  \right)+ \frac{\dega}{\dega/2+1} \cdot\log \left((c_1-c_0)d^+ n\right)\right)\right)\notag\\
& =
\bigo\left(T + \frac{d}{s}\frac{\log^2 n}{\mu}\right),
\end{align*}
where we recall that $p_f$ was given by expression~\eqref{eqn:a2a}, and that $c_1\dega - c_0\dega = \bigo\left(\frac{d\log n}{\mu}\right)$.

In this way, we have shown that a balancedness of $(\delta+1/2)\dega+2\degs$ is achieved in time $\bigo\left(T + \frac{d}{s}\frac{\log^2 n}{\mu}\right)$. By using the same techniques and Observation~\ref{obs:fastdropprime} instead of Observation~\ref{obs:fastdrop}, we can show that no node has a load of less than $\bar\load - (\delta+1/2)\dega+2\degs$. This gives the desired discrepancy bound of $(2\delta+1)\dega+4\degs$.
\end{proof}

\section{Omitted Proofs from Section \ref{sec:lowerbounds}}\label{sec:missinglower}

\subsection{Proof of Theorem \ref{lower1}}
\begin{proof}
We will describe an initial state, corresponding to a steady state distribution, such that the flow of load along each edge $e$ is the same in every moment of time ($f_0(e) = f_1(e) = f_2(e) = \ldots$), and the load of nodes does not change in time (however, the load of two distant nodes in the graph will be sufficiently far apart).
We take two vertices $u$ and $w$ such that the distance between them is $\diam(G)$.
We assign to every node $v \in V$ a value $b(v)$ being the shortest path distance from $v$ to $u$
($b(u) = 0$, $b(v)=1$ for direct neighbors of $u$, etc.).
For any given edge $(v_1,v_2)$, we assign
$$f_0(v_1,v_2) = \min(b(v_1),b(v_2))$$
We observe, that for each $v$:
$$\max_{e_1,e_2 \in E_{v}} | f_0(e_1)-f_0(e_2) | \le 1$$
and for each edge $(v_1,v_2)$:
$$f_0(v_1,v_2) = f_0(v_2,v_1).$$
Thus, at each step there exists a way to assign values of $\ceil{f(v)}$ and $\floor{f(v)}$ so as to achieve desired values over edges, and that the system is in the steady state.  The sought value of discrepancy is achieved for the considered pair of nodes  $u$ and $w$ whose distance in $G$ is $\diam(G)$.
\end{proof}

\subsection{Proof of Theorem \ref{the:lowerbound}}
\begin{proof}
We take an arbitrary graph $G$ with $n$ vertices which contains a $\floor{d/2}$-clique $C$.
We can construct such a graph by taking nodes numbered from $0$ to $n-1$ and connecting each pair of nodes $i$ and $j$ with an edge if and only if $(i-j) \bmod n \in \{n-\floor{d/2},\ldots,n-1,0,1,\ldots,\floor{d/2}\}$. If $d$ is odd, we also add edges $(i,j)$ for all $(i-j) \bmod n = n/2$.
W.l.o.g. we can assume that $C = \{0,1,\ldots,\floor{d/2}-1\}$.

Let $A$ be a deterministic and stateless algorithm. Since $A$ is deterministic and stateless, for each node $u$, at round $t$ the load which $A$ sends to neighbors and the load it keeps through the remainder vector depend solely on the current load of $u$. Let us fix $\ell= |C|-1$.
Initially, the load is distributed in such a way that every node in $C$ has load $\ell$ and every other node has load $0$.
For any given node whose current load is $\ell$, let $p^\circ$ denote the number of tokens kept by $A$ at the considered node,
and let $p_1,p_2,\ldots,p_d$ be the number of tokens sent by $A$ along respective original edges. Clearly, $\ell = p^\circ + \sum_{i=1}^{d} p_i$. At most $\ell$ of those values are positive, so we can assume w.l.o.g. that $p_d=p_{d-1}=\ldots=p_{\ell+1}=0$.

We complete the construction in such a way that at each time step the load over every node is preserved. Let us fix $i \in C$.
We design an adversary which has control over which values from $\{p_1,\ldots,p_d\}$ are sent along edges of the clique,
and which chooses to send along edges of the clique the possibly nonzero values $p_1,p_2,\ldots,p_{\ell}$. These values will be assigned to the edges $(i,(i+1) \bmod d),(i,(i+2) \bmod d),\ldots,(i,(i-1)\bmod d)$, respectively. We assign all other values arbitrarily since they are all equal to 0. Thus, we observe that at each step loads of nodes are preserved, since the new load of all nodes having load $\ell$ at the end of a step is $p^{\circ} + \sum_{i=1}^{\ell} p_i = \ell$ in the next step. All other nodes in $V\setminus C$ will not receive any tokens and they will remain with load 0.
Thus, the load of nodes in the graph does not change over rounds and the load difference of nodes in $C$ and the nodes in $V\setminus C$ is $cd$, for some constant $c>0$.
\end{proof}

\subsection{Proof of Theorem \ref{lowerbound:rotorrouter}}
\begin{proof}
Let $u$ be an arbitrary vertex belonging to the shortest odd cycle.
We assign to every node $v \in V$ a value $b(v)$ being the shortest path distance from $v$ to $u$
($b(u) = 0$, $b(v)=1$ for direct neighbors of $u$, etc.).
Observe that for any edge $(v_1,v_2)$, we have $b(v_1)-b(v_2) \in \{-1,0,1\}$. Moreover, we have $b(v_1) = b(v_2)$ only if $b(v_1) \ge \varphi(G)$. Suppose $b(v_1) < \varphi(G)$. We can construct an odd length cycle $u \leadsto v_1 \to v_2 \leadsto u$ of at most $2\cdot \varphi(G)-1$, a contradiction.

For the construction, fix a sufficiently large integer $L>0$ which will be linked to the average load in the system (it has no effect over the final discrepancy, we need $L$ to be large enough to have nonnegative values of load). We will design a configuration of load in the system which will alternate between two different states, identical for all configurations in odd time steps and even time steps, respectively (thus, $f_0(e) = f_2(e) = \ldots$ and $f_1(e) = f_3(e) = \ldots$). We will describe a configuration at any time step by providing values distributed over every edge. The load distributed over edge $(v_1,v_2)$ will only depend on the values of $b(v_1)$ and $b(v_2)$.
If $b(v_1) \ge \varphi(G)$ or $b(v_2) \ge \varphi(G)$, we set $f_0(v_1,v_2) = L$, otherwise:
$$f_0(v_1,v_2) =
\begin{cases}
L+\left(\varphi(G)-\min(b(v_1),b(v_2))\right)& \text{if } 2 | b(v_1) \text{ and } 2 \!\!\not| \, b(v_2),\\
L-\left(\varphi(G)-\min(b(v_1),b(v_2))\right)& \text{if } 2 \!\!\not| \, b(v_1) \text{ and } 2 | b(v_2).
\end{cases}
$$
We also set:
\begin{equation}
\label{eq:flow_rr}
f_1(v_1,v_2) = f_0(v_2,v_1).
\end{equation}
Setting all of $f_0(e)$ and $f_1(e)$ is enough to describe every value over every edge. We now prove that such a configuration is possible for some execution of the \RR algorithm, \ie, that there exists an ordering of the original edges of each node in the cycle of the \RR which leads to such alternating configurations.
We observe that $f_t(v_1,v_2) + f_t(v_2,v_1) = 2L$ for $t\in\natN$. Thus, for $t\in\natN$ we have
\begin{equation}
\label{eq:symmetry_rr}
f_t(v_1,v_2) + f_{t+1}(v_1,v_2) = 2L.
\end{equation}
We observe, that for any node $v$ and its two neighbors $v_1,v_2$, it holds
$$ |f_t(v,v_1) - f_t(v,v_2)| \le 1$$
Also, due to (\ref{eq:symmetry_rr}):
$$\ldots = f_t(v,v_1) - f_t(v,v_2) = -\left(f_{t+1}(v,v_1) - f_{t+1}(v,v_2)\right) = f_{t+2}(v,v_1) - f_{t+2}(v,v_2) = \ldots$$
By (\ref{eq:flow_rr}), the incoming and original flows of load through edges are preserved, and because the difference between original flows is alternating in signs (for two incident original edges), it is always possible to choose an edge ordering in the cycle of the \RR representing such a situation.
Indeed, we observe that for a particular vertex $v$, original directed edges of $v$ can be partitioned into two sets $P_1 \cup P_2$, where in even steps edges from $P_1$ are given one more token than edges from $P_2$, and in odd steps edges from $P_1$ are given one less token than edges from $P_2$. So it is enough to select an ordering of edges for the \RR such that every edge from $P_1$ precedes every edge from $P_2$ set.

We observe that the node $u$ alternates between loads $(L+\varphi(G))\cdot d$ and $(L-\varphi(G))\cdot d$, while the average load of a node in this setting is exactly $L \cdot d$, which gives us the claimed discrepancy.
\end{proof}

\end{document}